\documentclass[conference]{IEEEtran}
\usepackage[boxruled]{algorithm2e}
\usepackage{cite}
\usepackage{graphicx}
\usepackage{psfrag}
\usepackage{subfigure}
\usepackage{url}
\usepackage{amsmath}
\usepackage{array}
\usepackage{amssymb}
\usepackage{amsfonts}
\usepackage{graphicx}
\usepackage{epstopdf}

\newtheorem{lemma}{Lemma}

\newtheorem{remark}{Remark}
\newtheorem{definition}{Definition}

\newtheorem{example}{Example}

\title{Physical Layer Network Coding for Two-Way Relaying with QAM and Latin Squares}
\begin{document}

\author{
\authorblockN{Vishnu Namboodiri}
\authorblockA{Dept. of ECE, Indian Institute of Science \\
Bangalore 560012, India\\
Email: vishnukk@ece.iisc.ernet.in
}
\and
\authorblockN{B. Sundar Rajan}
\authorblockA{Dept. of ECE, Indian Institute of Science, \\Bangalore 560012, India\\
Email: bsrajan@ece.iisc.ernet.in
}
}

\maketitle
\thispagestyle{empty}	
\begin{abstract}
The design of modulation schemes for the physical layer network-coded two way relaying scenario has been extensively studied recently with the protocol which employs two phases: Multiple access (MA) Phase and Broadcast (BC) Phase. It was observed by Koike-Akino et al. that adaptively changing the network coding map used at the relay according to the channel conditions greatly reduces the impact of multiple access interference which occurs at the relay during the MA Phase and all these network coding maps should satisfy a requirement called the {\it exclusive law}. In \cite{NVR} it is shown that every network coding map that satisfies the exclusive law is representable by a Latin Square and conversely, and this relationship can be used to get the network coding maps satisfying the exclusive law. But, only the  scenario in which the end nodes use $M$-PSK signal sets (where $M$ is of the form $2^\lambda$, $\lambda$ being any positive integer) is extensively studied in \cite{NVR}. In this paper, we address the case in which the end nodes use $M$-QAM signal sets (where $M$ is of the form $2^{2\lambda}$, $\lambda$ being any positive integer). In a fading scenario, for certain channel conditions $\gamma e^{j \theta}$, termed singular fade states, the MA phase performance is greatly reduced. We show that the square QAM signal sets give lesser number of singular fade states compared to PSK signal sets. Because of this, the complexity at the relay is enormously reduced. Moreover, lesser number of overhead bits are required in the BC phase. 
The fade state $\gamma e^{j \theta}=1$ is singular for all constellations of arbitrary size including PSK and QAM. For arbitrary PSK constellation it is well known that the Latin Square obtained by bit-wise XOR mapping removes this singularity. We show that XOR mapping fails to remove this singularity for QAM of size more greater than 4 and show that a doubly block circulant Latin Square removes this singularity. Simulation results are presented to show the superiority of QAM over PSK. 
\end{abstract}


\section{Preliminaries and Background}
We consider the two-way wireless relaying scenario shown in Fig.\ref{relay_channel}, where bi-directional data transfer takes place between the nodes A and B with the help of the relay R. It is assumed that all the three nodes operate in half-duplex mode, i.e., they cannot transmit and receive simultaneously in the same frequency band. The relaying protocol consists of the following two phases: the \textit{multiple access} (MA) phase, during which A and B simultaneously transmit to R using identical square $M$-QAM signal sets and the \textit{broadcast} (BC) phase during which R transmits to A and B using possibly with another square $M$-QAM or constellations of size more than $M$. Network coding is employed at R in such a way that A (B) can decode the message of B (A), given that A (B) knows its own message. 
\begin{figure}[htbp]
\centering
\subfigure[MA Phase]{
\includegraphics[totalheight=1in,width=2in]{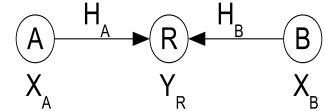}
\label{fig:phase1}
}

\subfigure[BC Phase]{
\includegraphics[totalheight=1in,width=2in]{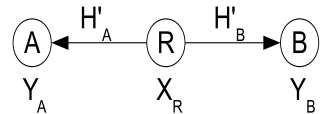}
\label{fig:phase2}
}
\caption{The Two Way Relay Channel}
\label{relay_channel}
\end{figure}
\vspace{-0.1 cm}
\subsection{Background}
 
 The concept of physical layer network coding has attracted a lot of attention in recent times. The idea of physical layer network coding for the two way relay channel was first introduced in \cite{ZLL}, where the multiple access interference occurring at the relay was exploited so that the communication between the end nodes can be done using a two stage protocol. Information theoretic studies for the physical layer network coding scenario were reported in \cite{KMT},\cite{PoY}. The design principles governing the choice of modulation schemes to be used at the nodes for uncoded transmission were studied in \cite{APT1}. An extension for the case when the nodes use convolutional codes was done in \cite{APT2}. A multi-level coding scheme for the two-way relaying scenario was proposed in \cite{HeN}.

It was observed in \cite{APT1} that for uncoded transmission, the network coding map used at the relay needs to be changed adaptively according to the channel fade coefficients, in order to minimize the impact of the multiple access interference. The proposed Latin Square scheme was studied in \cite{NVR}, \cite{VNR} by considering a two way relaying using $M$-PSK signal sets at the end nodes. 
In \cite{APT1} analysis of 16-QAM is done under the assumption that precoding is done at the end nodes. We address  the situation where no precoding assumption is made and to the best of our knowledge no work has been reported for such a scenario with general $M$-QAM modulation. 

\subsection{Signal Model}
\subsubsection*{Multiple Access (MA) Phase}

Let $\mathcal{S}$ denote the square $M$-QAM constellation used at A and B, where $M=2^{2\lambda}$, $\lambda$ being a positive integer. Assume that A (B) wants to transmit an $2\lambda$-bit binary tuple to B (A). Let $\mu:\mathcal{S}  \rightarrow \mathbb{F}_{2^{2\lambda}}$ denote the mapping from complex symbols to bits used at A and B. Let $\mu(x_A)= s_A$, $\mu(x_B)=s_B \in \mathcal{S}$ denote the complex symbols transmitted by A and B respectively, where $s_A,s_B \in \mathbb{F}_{2^{2\lambda}}$. The received signal at $R$ is given by,
\begin{align}
\nonumber
Y_R=H_{A} x_A + H_{B} x_B +Z_R,
\end{align}
where $H_A$ and $H_B$ are the fading coefficients associated with the A-R and B-R links respectively. The additive noise $Z_R$ is assumed to be $\mathcal{CN}(0,\sigma^2)$, where $\mathcal{CN}(0,\sigma^2)$ denotes the circularly symmetric complex Gaussian random variable with variance $\sigma ^2$. We assume a block fading scenario, with the ratio $ H_{B}/H_{A}$ denoted as $z=\gamma e^{j \theta}$, where $\gamma \in \mathbb{R}^+$ and $-\pi \leq \theta < \pi,$ is referred as the {\it fade state} and for simplicity, also denoted by $(\gamma, \theta).$    
 
 Let $\mathcal{S}_{R}(\gamma,\theta)$ denote the effective constellation at the relay during the MA Phase, i.e., 
\begin{align} 
\nonumber
 \mathcal{S}_{R}(\gamma,\theta)=\left\lbrace x_i+\gamma e^{j \theta} x_j \vert x_i,x_j \in \mathcal{S}\right \rbrace,
 \end{align}
and $d_{min}(\gamma e^{j\theta})$ denote the minimum distance between the points in $\mathcal{S}_{R}(\gamma,\theta)$, i.e.,

{\footnotesize
\begin{align}
\label{eqn_dmin} 
d_{min}(\gamma e^{j\theta})=\hspace{-0.5 cm}\min_{\substack {{(x_A,x_B),(x'_A,x'_B)}{ \in \mathcal{S}^2 } \\ {(x_A,x_B) \neq (x'_A,x'_B)}}}\hspace{-0.5 cm}\vert \left(x_A-x'_A\right)+\gamma e^{j \theta} \left(x_B-x'_B\right)\vert.
\end{align}
}
 
 From \eqref{eqn_dmin}, it is clear that there exists values of $\gamma e^{j \theta}$ for which $d_{min}(\gamma e^{j\theta})=0$. Let $\mathcal{H}=\lbrace \gamma e^{j\theta} \in \mathbb{C} \vert d_{min}(\gamma,\theta)=0 \rbrace$. The elements of $\mathcal{H}$ are said to be {\it the singular fade states}. Singular fade states can also be defined as

\begin{definition}
 A fade state $\gamma e^{j \theta}$ is said to be a singular fade state, if the cardinality of the signal set $\mathcal{S}_{R}(\gamma, \theta)$ is less than $M^2$.
\end{definition}
 
 For example, consider the case when symmetric 4-QAM signal set used at the nodes A and B, i.e., $\mathcal{S}=\lbrace (\pm 1 \pm j)/\sqrt{2} \rbrace$. For $\gamma e^{j \theta}=(1+j)/2$, $d_{min}(\gamma e^{j \theta})=0$, since,
\begin{align*} 
 \left\vert \left( \dfrac{1+j}{\sqrt{2}}-\dfrac{1-j}{\sqrt{2}} \right) + \dfrac{(1+j)}{2} \left( \dfrac{-1-j}{\sqrt{2}} - \dfrac{1+j}{\sqrt{2}} \right)\right\vert=0.
 \end{align*}
\noindent 
Alternatively, when $\gamma e^{j \theta}=(1+j)/2$, the constellation $\mathcal{S}_{R}(\gamma,\theta)$ has only 12 ($<$16) points. 
 Hence $\gamma e^{j \theta}=(1+j)/2$ is a singular fade state for the case when 4-QAM signal set is used at A and B.
 Let $(\hat{x}_A,\hat{x}_B) \in \mathcal{S}^2$ denote the Maximum Likelihood (ML) estimate of $({x}_A,{x}_B)$ at R based on the received complex number $Y_{R}$, i.e.,
 \begin{align}
 (\hat{x}_A,\hat{x}_B)=\arg\min_{({x}'_A,{x}'_B) \in \mathcal{S}^2} \vert Y_R-H_{A}{x}'_A-H_{B}{x}'_B\vert.
 \end{align}
\subsubsection*{Broadcast (BC) Phase}

Depending on the value of $\gamma e^{j \theta}$, R chooses a map $\mathcal{M}^{\gamma,\theta}:\mathcal{S}^2 \rightarrow \mathcal{S}'$, where $\mathcal{S}'$ is the signal set (of size between $M$ and $M^2$) used by R during $BC$ phase. The elements in $\mathcal{S}^2 $ which are mapped on to the same complex number in $\mathcal{S}'$ by the map $\mathcal{M}^{\gamma,\theta}$ are said to form a cluster. Let $\lbrace \mathcal{L}_1, \mathcal{L}_2,...,\mathcal{L}_l\rbrace$ denote the set of all such clusters. The formation of clusters is called clustering, and denoted by $\mathcal{C}^{\gamma e^{j\theta}}$ to indicate that it is a function of $\gamma e^{j \theta}.$ The received signals at A and B during the BC phase are respectively given by,
\begin{align}
Y_A=H'_{A} X_R + Z_A,\;Y_B=H'_{B} X_R + Z_B,
\end{align}
where $X_R=\mathcal{M}^{\gamma,\theta}(\hat{x}_A,\hat{x}_B) \in \mathcal{S'}$ is the complex number transmitted by R. The fading coefficients corresponding to the R-A and R-B links are denoted by $H'_{A}$ and $H'_{B}$ respectively and the additive noises $Z_A$ and $Z_B$ are $\mathcal{CN}(0,\sigma ^2$).

In order to ensure that A (B) is able to decode B's (A's) message, the clustering $\mathcal{C}$ should satisfy the exclusive law \cite{APT1}, i.e.,

{\footnotesize
\begin{align}
\left.
\begin{array}{ll}
\nonumber
\mathcal{M}^{\gamma,\theta}(x_A,x_B) \neq \mathcal{M}^{\gamma,\theta}(x'_A,x_B), \; \mathrm{for} \;x_A \neq x'_A \; \mathrm{,} \; \forall x_B \in  \mathcal{S},\\
\nonumber
\mathcal{M}^{\gamma,\theta}(x_A,x_B) \neq \mathcal{M}^{\gamma,\theta}(x_A,x'_B), \; \mathrm{for} \;x_B \neq x'_B \; \mathrm{,} \;\forall x_A \in \mathcal{S}.
\end {array}
\right\} \\
\label{ex_law}
\end{align}
\vspace{-.3 cm}
}

\begin{definition}
The cluster distance between a pair of clusters $\mathcal{L}_i$ and $\mathcal{L}_j$ is the minimum among all the distances calculated between the points $x_A+\gamma e^{j\theta} x_B ,x'_A+\gamma e^{j\theta} x'_B \in \mathcal{S}_R(\gamma,\theta)$ where $(x_A,x_B) \in \mathcal{L}_i$ and $(x'_A,x'_B) \in \mathcal{L}_j.$ The \textit{minimum cluster distance} of the clustering $\mathcal{C}$ is the minimum among all the cluster distances, i.e.,

{\footnotesize
\begin{align}
\nonumber
d_{min}^{\mathcal{C}}(\gamma e^{j \theta})=\hspace{-0.8 cm}\min_{\substack {{(x_A,x_B),(x'_A,x'_B)}\\{ \in \mathcal{S}^2,} \\ {\mathcal{M}^{\gamma,\theta}(x_A,x_B) \neq \mathcal{M}^{\gamma,\theta}(x'_A,x'_B)}}}\hspace{-0.8 cm}\vert \left( x_A-x'_A\right)+\gamma e^{j \theta} \left(x_B-x'_B\right)\vert.
\end{align}
}

\end{definition}

The minimum cluster distance determines the performance during the MA phase of relaying. The performance during the BC phase is determined by the minimum distance of the signal set $\mathcal{S}'$. For values of $\gamma e^{j \theta}$ in the neighborhood of the singular fade states, the value of $d_{min}(\gamma e^{j\theta})$ is greatly reduced, a phenomenon referred as {\it distance shortening}. To avoid distance shortening, for each singular fade state, a clustering needs to be chosen such that the minimum cluster distance at the singular fade state is non-zero and is also maximized.  

A clustering $\mathcal{C}$ is said to remove a singular fade state $ h \in \mathcal{H}$, if $d_{min}^{\mathcal{C}}(h)>0$. 
For a singular fade state $h \in \mathcal{H}$, let $\mathcal{C}_{\lbrace h\rbrace}$ denote a clustering which removes the singular fade state $h$ (if there are multiple clusterings which remove the same singular fade state $h$, consider a clustering which maximizes the minimum cluster distance). Let $\mathcal{C}_{\mathcal{H}}=\left\lbrace \mathcal{C}_{\lbrace h\rbrace} : h \in \mathcal{H} \right\rbrace$ denote the set of all such clusterings. Let $d_{min}({\mathcal{C}^{\lbrace h\rbrace}},\gamma',\theta')$ be defined as,

{\footnotesize
\begin{align}
\nonumber
d_{min}({\mathcal{C}^{\lbrace h\rbrace}},\gamma',\theta')=\hspace{-0.8 cm}\min_{\substack {{(x_A,x_B),(x'_A,x'_B) \in \mathcal{S}^2,} \\ {\mathcal{M}^{\lbrace h\rbrace}(x_A,x_B) \neq \mathcal{M}^{\lbrace h \rbrace}(x'_A,x'_B)}}}\hspace{-0.8 cm}\vert \left( x_A-x'_A\right)+\gamma' e^{j \theta'} \left(x_B-x'_B\right)\vert.
\end{align}
}

The quantity $d_{min}({\mathcal{C}^{\lbrace h\rbrace}},\gamma,'\theta')$ is referred to as  the minimum cluster distance of the clustering $\mathcal{C}^{\lbrace h\rbrace}$ evaluated at $\gamma' e^{j\theta'}.$

In practice, the channel fade state need not be a singular fade state. In such a scenario, among all the clusterings which remove the singular fade states, the one which maximizes the minimum cluster distance is chosen. In other words, for $\gamma' e^{j \theta'} \notin \mathcal{H}$, the clustering $\mathcal{C}^{\gamma',\theta'}$ is chosen to be $\mathcal{C}^{\lbrace h\rbrace}$, which satisfies $d_{min}({\mathcal{C}^{\lbrace h\rbrace}},\gamma',\theta') \geq d_{min}({\mathcal{C}^{\lbrace h' \rbrace}},\gamma',\theta'), \forall h \neq h' \in \mathcal{H}$. Since the clusterings which remove the singular fade states are known to all the three nodes and are finite in number, the clustering used for a particular realization of the fade state can be indicated by R to A and B using overhead bits.

\begin{example}
In the case of BPSK, if channel condition is $\gamma=1$ and $\theta=0$ the distance between the pairs $(0,1)(1,0)$ is zero as in Fig.\ref{fig:BPSK}(a).The following clustering remove this singular fade state.
$$\{\{(0,1)(1,0)\},\{(1,1)(0,0)\}\}$$
The minimum cluster distance is non zero in this clustering.
 \begin{figure}[h]
\centering
\vspace{-.45 cm}
\includegraphics[totalheight=2.5in,width=2.5in]{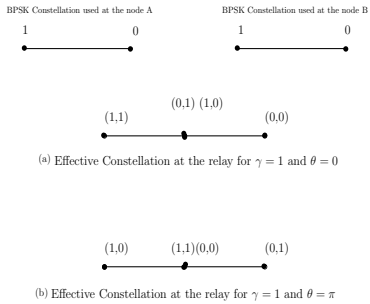}
\vspace{-1 cm}
\caption{Effective Constellation at the relay for singular fade states, when the end nodes use BPSK constellation.}     
\label{fig:BPSK}        
\end{figure}
\end{example}

To remove the distance shortening effect a procedure is given in \cite{APT1} when the nodes A and B use QPSK signal set. The procedure suggested in \cite{APT1} to obtain the channel quantization and the clusterings, was using a computer algorithm, which involved varying the fade state values over the entire complex plane, i.e., $0 \leq  \gamma < \infty$, $0 \leq \theta < 2\pi$ in small discrete steps and finding the clustering for each value of channel realization. But such an approach have many issues. In \cite{APT1}, it is claimed that the clustering used by the relay is indicated to A and B by using overhead bits. However, the procedure suggested in \cite{APT1} to obtain the set of all clusterings, was using a computer search algorithm (called Closest Neighbour Clustering (CNC) algorithm), which involved varying the fade state values over the entire complex plane, i.e., $0 \leq  \gamma < \infty$, $0 \leq \theta < 2\pi$ and finding the clustering for each value of channel realization as discussed in previous sections. The total number of network codes which would result is known only after the algorithm is run for all possible realizations $\gamma e^{j \theta}$ which is uncountably infinite and hence the number of overhead bits required is not known beforehand. Moreover, performing such an exhaustive search is extremely difficult in practice, especially when the cardinality of the signal set $M$ is large.

The implementation complexity of CNC suggested in \cite{APT1} is extremely high: It appears that, for each realization of the singular fade state, the CNC algorithm of \cite{APT1} needs to be run at R to find the clustering. 

In the CNC algorithm suggested in \cite{APT1}, the network coding map is obtained by considering the entire distance profile. The disadvantages of such an approach are two-fold. 
\begin{itemize}
\item
Considering the entire distance profile, instead of the minimum cluster distance alone which contributes dominantly to the error probability, results in an extremely large number of network coding maps. For example, for 16-QAM signal set, the CNC algorithm results in more than 18,000 maps \cite{APT1}.
\item
 The CNC algorithm tries to optimize the entire distance profile, even after clustering signal points which contribute the minimum distance. As a result, for several channel conditions, the number of clusters in the clustering obtained is greater than the number of clusters in the clustering obtained by taking the minimum distance alone into consideration. This results in a degradation in performance during the BC phase, since the relay uses a signal set with cardinality equal to the number of clusters. For example, for 16-QAM signal set, the relay has to use signal sets of cardinality 16 to 29 \cite{APT1}. 
 \end{itemize}

In \cite{APT1}, to overcome the two problems mentioned above, another algorithm is proposed, in which for a given $\gamma e^{j \theta}$, an exhaustive search is performed among all the network coding maps obtained using the closest-neighbour clustering algorithm and a map with minimum number of clusters is chosen. The difficulties associated with the implementation of the CNC algorithm carry over to the implementation of this algorithm as well.

The contributions and organization of the paper are as follows:
\begin{itemize}
\item A procedure to obtain the number of singular fade states for PAM and QAM signal sets is presented.  
\item It is shown that for the same number of signal points $M$, the number of singular fade states for square $M$-QAM is lesser in comparison with the number of singular fade states for $M$-PSK. The advantages of this result are two fold - QAM offers better distance performance in MA Phase and QAM requires lesser number of Latin squares (i.e., a reduction in number of overhead bits).
\item To remove the singular fade state $(\gamma=1, \theta=0)$ for $\sqrt M$-PAM, a Latin Square is constructed. It is shown that the bit-wise XOR mapping cannot remove the singular fade state $(\gamma=1, \theta=0)$ for any  $M$-QAM and a different mapping is obtained to remove the singular fade state $(\gamma=1, \theta=0)$, from the Latin Square to remove the singular fade state $(\gamma=1, \theta=0)$ for $\sqrt M$-PAM.
\item By simulation it is shown that the choice of 16-QAM leads to better performance for both the Rayleigh and the Rician fading scenario, compared to 16-PSK. 
\end{itemize}

The remaining content is organized as follows:  

In Section \ref{sec2} we discuss the relationship between singular fade states and difference constellation of the signal sets used by the end nodes. We present expressions to get the number of singular fade states for PAM and square QAM signal sets in Subsections \ref{subsec_1_2} and \ref{subsec_2_2} respectively. In Subsection \ref{subsec_3_2} it is proved that the number of singular fade states for $M$-QAM is always lesser in comparison with that of $M$-PSK signal sets.  In Section \ref{sec3} the clustering for a singular fade state is obtained through completing a Latin Square and a Latin Square for removing the singular fade state $z=1$ is analytically obtained for PAM and QAM signal sets. In Section \ref{sec4} simulation results are provided to show the advantage of Latin Square scheme for QAM over XOR network coding scheme as well as Latin Square scheme for PSK signal sets under Rayleigh and Rician fading channel assumptions.

 \section{Singular Fade states and Difference Constellations}
 \label{sec2}
In this section we show the relationships between singular fade states and difference constellation of the signal set used by the end nodes. 
The following lemma discusses the location of singular fade states in complex plane for any constellation used at end nodes. 
\begin{lemma}
\label{sfs}
 Let node A use a constellation $\mathcal{S}_1$ of size $M_1$ and let node B use a constellation $\mathcal{S}_{2}$ of size $M_2$. Let $x_A,x_A^{\prime} \in \mathcal{S}_1$ and $x_B,x_B^{\prime} \in \mathcal{S}_2$, then the singular fade states $z=\gamma e^{j \theta}$ are given by
 \begin{equation}
 \label{sing_expression}
 z=\gamma e^{j\theta}=\dfrac{x_A-x_A^{\prime}}{x_B^{\prime}-x_B} 
\end{equation}
where $x_A,x_A^{\prime} \in \mathcal{S}_1$ and $x_B,x_B^{\prime} \in \mathcal{S}_2$.
  \end{lemma}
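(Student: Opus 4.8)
The plan is to argue directly from the definition of a singular fade state as a value of $z=\gamma e^{j\theta}$ at which two distinct transmit pairs collapse onto the same point of the effective constellation $\mathcal{S}_R(\gamma,\theta)$. By the definition of a singular fade state (equivalently, by \eqref{eqn_dmin}), $z=\gamma e^{j\theta}$ is singular if and only if $\lvert \mathcal{S}_R(\gamma,\theta)\rvert < M_1 M_2$, i.e.\ there exist $(x_A,x_B)\neq(x_A^{\prime},x_B^{\prime})$ in $\mathcal{S}_1\times\mathcal{S}_2$ with $x_A+\gamma e^{j\theta}x_B = x_A^{\prime}+\gamma e^{j\theta}x_B^{\prime}$. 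Rearranging, this is exactly the condition
\[
\left(x_A-x_A^{\prime}\right)+\gamma e^{j\theta}\left(x_B-x_B^{\prime}\right)=0 .
\]

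First I would dispose of the degenerate possibilities. If $x_B=x_B^{\prime}$, the displayed equation forces $x_A=x_A^{\prime}$, contradicting $(x_A,x_B)\neq(x_A^{\prime},x_B^{\prime})$; symmetrically, $x_A=x_A^{\prime}$ would force $x_B=x_B^{\prime}$ (here one uses that a fade state has $\gamma\in\mathbb{R}^+$, so $\gamma e^{j\theta}\neq 0$). Hence both $x_A\neq x_A^{\prime}$ and $x_B\neq x_B^{\prime}$, so $x_B^{\prime}-x_B\neq 0$ and we may solve for $z$, obtaining \eqref{sing_expression}. Conversely, for any $x_A\neq x_A^{\prime}$ in $\mathcal{S}_1$ and $x_B\neq x_B^{\prime}$ in $\mathcal{S}_2$ the quantity $(x_A-x_A^{\prime})/(x_B^{\prime}-x_B)$ is a well-defined nonzero complex number, hence is of the form $\gamma e^{j\theta}$ with $\gamma>0$ and $-\pi\le\theta<\pi$; for this fade state the distinct pairs $(x_A,x_B)$ and $(x_A^{\prime},x_B^{\prime})$ satisfy $x_A+\gamma e^{j\theta}x_B = x_A^{\prime}+\gamma e^{j\theta}x_B^{\prime}$, so $d_{min}(\gamma e^{j\theta})=0$ and $z$ is singular. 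This also makes the title of the section concrete: the singular fade states are exactly the ratios of a nonzero element of the difference constellation of $\mathcal{S}_1$ to a nonzero element of the difference constellation of $\mathcal{S}_2$, which is the starting point for the enumeration carried out in the later subsections.

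I do not expect a genuine obstacle here, since the statement is essentially a rewriting of the definition together with a division. The only points needing a moment's care are the book-keeping on the degenerate cases — ruling out $x_A=x_A^{\prime}$ and $x_B=x_B^{\prime}$ via $\gamma\neq 0$ — and observing that the resulting ratio is automatically nonzero, so that it legitimately corresponds to a fade state with $\gamma\in\mathbb{R}^+$.
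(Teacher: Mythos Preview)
Your argument is correct and follows the same approach as the paper: equate $x_A+\gamma e^{j\theta}x_B$ with $x_A^{\prime}+\gamma e^{j\theta}x_B^{\prime}$ and solve for $z$. Your version is in fact more careful than the paper's, which simply equates the two complex numbers without explicitly discussing the degenerate cases or the converse direction.
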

\begin{proof}
The pair $(x_A,x_A^{\prime})$ and $(x_B,x_B^{\prime})$ result in the same point in the effective constellation at the relay if 
 the complex numbers $x_A+\gamma e^{j \theta} x_B$ and $x_A^{\prime}+\gamma e^{j\theta}x_B^{\prime} $ are the same. The expression \eqref{sing_expression} is obtained by equating these complex numbers.
\end{proof}

From Lemma \ref{sfs} it can be seen that all the singular fade states in the complex plane is of the form of ratio of difference constellation points of the signal sets used by end nodes, i.e., the singular fade states are decided by the difference constellation points. Henceforth, throughout the paper,  we assume both the end nodes use same constellation, $\mathcal{S}$. Let $\Delta\mathcal{S}$ denote the difference constellation of the signal set used at the end nodes $\mathcal{S}$, i.e., $\Delta\mathcal{S}=\lbrace x_i-x'_i \vert  x_i, x'_i \in \mathcal{S}\rbrace$. For a fade state $z=\gamma e^{j\theta}$ to become a singular fade state, it has to satisfy  \eqref{sing_expression}, or in other words $z(x_B^{\prime}-x_B)=(x_A-x_A^{\prime})$, where $(x_B^{\prime}-x_B)$ and $(x_A-x_A^{\prime})$ are any point in $ \Delta\mathcal{S}$. Hence, a singular fade state can be alternatively defined as follows.
\begin{definition}
\label{sfs_alter_def}
A singular fade state $z$ is a mapping $\mathcal{Z}$ from  $\Delta\mathcal{S}$ to the complex plane $\mathbb{C}$ so that at least one $d_k \in \Delta\mathcal{S}$ is mapped to some $d_l \in \Delta\mathcal{S}$. The set of all singular fade states is given by $\{\mathcal{Z}:  \Delta\mathcal{S} \rightarrow \mathbb{C} \vert \hspace{.05cm} \exists \hspace{.05cm} \mathcal{Z}(d_k)= d_l\}$.
\end{definition}

\begin{remark}
The singular fade state $z=1$ is the mapping from $\Delta\mathcal{S}$ to itself that maps every point to itself.  
\end{remark}

In the rest of this section,  we focus on the singular fade states for symmetric PAM and square QAM signal sets. 
\subsection{Singular Fade States of PAM signal sets} 
\label{subsec_1_2}
 The symmetric $\sqrt M$-PAM signal set is given by 
$$ \mathcal{S}= -(\sqrt M -1)+ 2n,  ~~~  n \in (0, \cdots ,\sqrt M -1) $$ and its difference constellation is given by $$ \Delta\mathcal{S}= -2(\sqrt M -1)+ 2n, ~~~ n \in (0, \cdots ,2(\sqrt M -1)).$$ For example the 4-PAM signal set and it's difference constellation is given in Fig.\ref{fig:pam} and Fig.\ref{fig:pamdiff} respectively. For each of the difference constellation point, the pair in the signal set which correspond to this point is also shown.
 \begin{figure}[t]
\centering
\subfigure[$\sqrt M$ PAM constellation]{
\includegraphics[totalheight=.4in,width=2in]{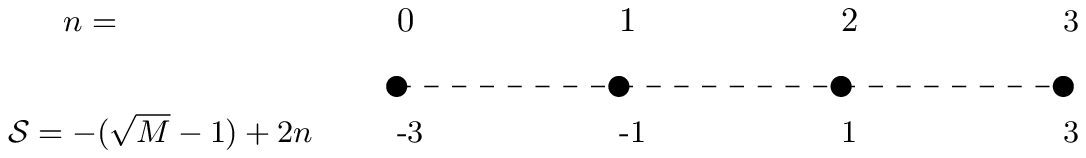}
\label{fig:pam}
}

\subfigure[Difference Constellation]{
\includegraphics[totalheight=.8in,width=3.25in]{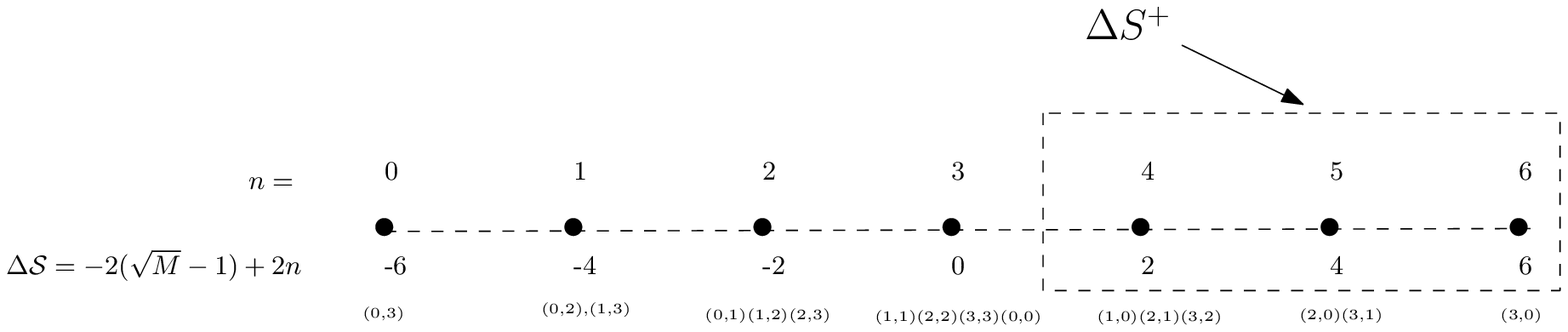}
\label{fig:pamdiff}
}

\subfigure[Singular fade states]{
\includegraphics[totalheight=1.5in,width=2.75in]{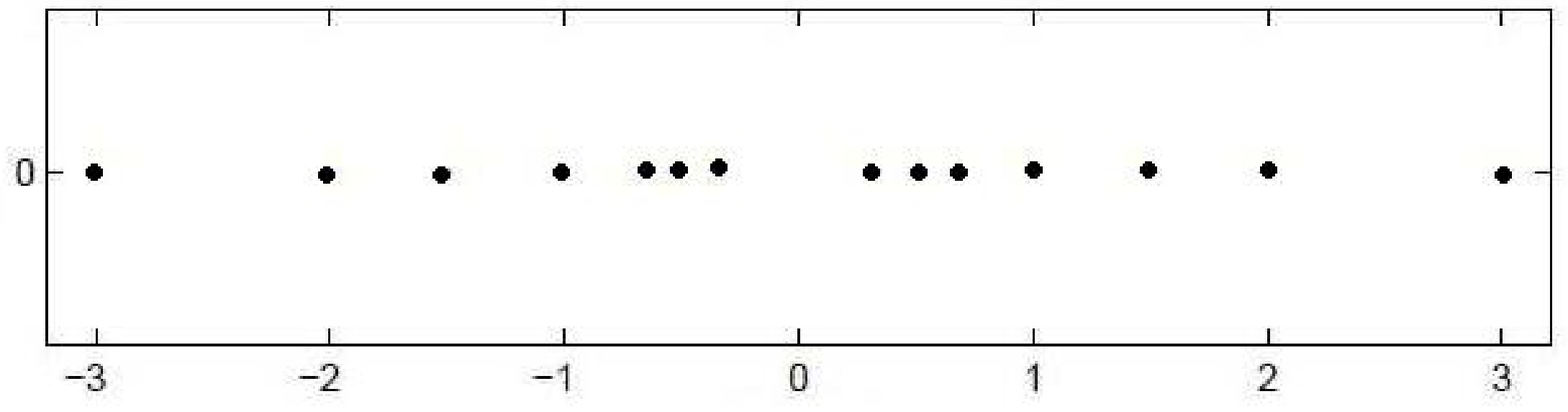}
\label{fig:pam_sing}
}
\caption{$\sqrt M$ PAM constellation, difference constellation and singular fade states for $\sqrt M=4$}
\label{fig:4pam}
\end{figure}
We will often consider only the first quadrant of $\Delta{S}$ only, denoted as $\Delta{S}^{+},$ which for a general complex signal set is given by  $$\Delta{S}^{+} = \{\alpha : \mbox{real}(\alpha)>0, \mbox{imaginary}(\alpha) \geq 0\}.$$ 
The following lemma gives the number of singular fade states for PAM signal sets.
 
 \begin{lemma}
 \label{no_sing_pam}
 The number of singular fade states, for a regular $\sqrt M$-PAM signal set, denoted by $N_{(\sqrt M-PAM)}$ is given by
 \begin{equation}
 \label{sum_euler}
 N_{(\sqrt M-PAM)}= 2 + 4\sum_{n=1}^{\sqrt M -1}n \prod_{p\vert n} \left(1-\frac{1}{p}\right)
 \end{equation}
 where $p|n$ stands for  prime number $p$ dividing $n.$
 \end{lemma}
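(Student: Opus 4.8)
The plan is to count the distinct complex numbers of the form $z = (x_A - x_A')/(x_B' - x_B)$ with $x_A, x_A', x_B, x_B' \in \mathcal{S}$, excluding the degenerate case $x_B = x_B'$. By Lemma \ref{sfs} and the subsequent discussion, this is exactly $|\{ d/d' : d, d' \in \Delta\mathcal{S}, d' \neq 0 \}|$. Since $\mathcal{S}$ is the symmetric $\sqrt M$-PAM set, $\Delta\mathcal{S} = \{ 2k : k \in \mathbb{Z},\ |k| \leq \sqrt M - 1 \}$, so every ratio $d/d'$ is a rational number $k/k'$ with $1 \leq |k|, |k'| \leq \sqrt M - 1$ (the case $k = 0$ gives $z = 0$, which is not a fade state since $\gamma > 0$, and I will note this is why the count is finite and why there is no singular fade state at the origin). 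Thus $N_{(\sqrt M - \mathrm{PAM})}$ equals the number of distinct values of $k/k'$ as $k, k'$ range over $\pm\{1, \dots, \sqrt M - 1\}$.

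First I would reduce to positive ratios by symmetry: a ratio $k/k'$ is positive or negative, and the set of attainable positive values equals the set of attainable negative values in absolute value, so $N = 2 P$ where $P = |\{ a/b : 1 \leq a, b \leq \sqrt M - 1 \}|$, with $a, b$ positive integers. Next, each positive rational in lowest terms $p/q$ with $\gcd(p,q) = 1$ is attainable iff there is a common multiplier $t$ with $tp \leq \sqrt M - 1$ and $tq \leq \sqrt M - 1$, i.e. iff $\max(p,q) \leq \sqrt M - 1$. So I would organize the count by the value $n = \max(p,q)$, $1 \leq n \leq \sqrt M - 1$: for each such $n$, the reduced fractions $p/q$ with $\max(p,q) = n$ and $\gcd(p,q)=1$ come in the value $1$ (only when $n = 1$) plus pairs $\{p/n, n/p\}$ with $1 \leq p < n$, $\gcd(p,n) = 1$, of which there are $\varphi(n)$ choices of $p$ for $n \geq 2$. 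Hence the number of reduced fractions with denominator-or-numerator-max equal to $n$ is $1$ for $n=1$ and $2\varphi(n)$ for $n \geq 2$, giving $P = 1 + 2\sum_{n=2}^{\sqrt M - 1} \varphi(n) = 1 + 2\sum_{n=1}^{\sqrt M - 1}\varphi(n) - 2$ — wait, better to write $P = 1 + \sum_{n=2}^{\sqrt M - 1} 2\varphi(n)$, and then $N = 2P = 2 + 4\sum_{n=2}^{\sqrt M -1}\varphi(n) = 2 + 4\sum_{n=1}^{\sqrt M - 1}\varphi(n)$ since $\varphi(1)=1$ contributes... I would double-check this normalization carefully against the small case $\sqrt M = 2$ (4-QAM), where the formula must give $N = 2 + 4\cdot 1 = 6$, and against $\sqrt M = 4$ using Fig.~\ref{fig:pam_sing}. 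Finally, substituting Euler's product formula $\varphi(n) = n \prod_{p \mid n}(1 - 1/p)$ yields \eqref{sum_euler}.

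The main obstacle is the bookkeeping at the boundary: correctly accounting for the ratio $z = 1$ (the "maps every point to itself" fade state of the Remark), making sure it is counted exactly once, and resolving whether the $\varphi(1)$ term belongs inside or outside the sum so that the closed form matches the stated \eqref{sum_euler}. Everything else — the bijection between attainable ratios and coprime pairs capped by $\sqrt M - 1$, and the $\pm$ symmetry doubling — is routine once the indexing by $n = \max(p,q)$ is set up. I would anchor the whole argument on two sanity checks ($\sqrt M = 2$ and $\sqrt M = 4$) before writing the general step, since that is where an off-by-one in the $\varphi$-sum would surface.
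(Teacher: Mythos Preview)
Your approach is correct and essentially identical to the paper's: both reduce the count to coprime pairs in $\{1,\dots,\sqrt M -1\}$ via Euler's totient, apply a factor of $4$ for signs and inverses, and add $2$ for $z=\pm 1$. The normalization obstacle you flag is handled in the paper by the convention $\psi(1)=0$ (cf.\ Table~\ref{prifactor_pam} and Example~2), so the sum in \eqref{sum_euler} effectively starts at $n=2$, in agreement with your derivation $N = 2 + 4\sum_{n=2}^{\sqrt M -1}\varphi(n)$; note also that $\sqrt M=2$ is $2$-PAM (BPSK), not $4$-QAM, and gives $N=2$.
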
 

\begin{proof}
There are $2(\sqrt M-1)$ non-zero signal points in the difference constellation $\Delta\mathcal{S}$ and since $ \Delta\mathcal{S}$ is symmetric about zero there are $\sqrt M-1$ signal points in $\Delta{S}^+$. All these are scaled version of nonzero elements of $\mathbb{Z}_{\sqrt M}$. 

The number of positive integers less than or equal to $n$ that are relatively prime to $n$ is given by Euler's totient function, 
\begin{align*}
 \psi(n)= n \prod_{p|n} \left(1-\frac{1}{p}\right)
\end{align*}
\noindent
where the product is taken over distinct prime numbers $p$ dividing $n$.  To get the total number of  relatively prime pairs in $\mathbb{Z}_{\sqrt M}$, we take the sum over all nonzero $n \in \mathbb{Z}_{\sqrt M}$  which gives $ \sum_{n=1}^{\sqrt M -1}n \prod_{p\vert n} \left(1-\frac{1}{p}\right).$ One relatively prime pair $(a,b)$ gives two singular fade states, $a/b$ and $b/a$. The multiplication factor $4$ in \eqref{sum_euler} accounts for the  negative side of the in-phase axis as well as the inverses. Finally, 2 is added to count the singular fade state $z=1$ and $z=-1$.
\end{proof}
\begin{example}
Consider the case of 4-PAM ($M=16$) signal set as given in Fig.\ref{fig:4pam}. There are $2(\sqrt M-1)=6$ non-zero signal points in the difference constellation. Scaled $\Delta{S}^+$ is having $(\sqrt M-1)=3$ signal points-$\{1,2,3\}$. And there are 14 singular fade states-
\begin{align*}
\left\{1,\frac{1}{2},\frac{1}{3},\frac{2}{3},2,3,\frac{3}{2},-1,\frac{-1}{2}, \frac{-1}{3},\frac{-2}{3},-2,-3,\frac{-3}{2}\right\}.
\end{align*} 
These singular fade states are shown in Fig.\ref{fig:pam_sing}. Calculating Euler totient function $\psi(n)$ for $n=$ 1,2,3 we get 0,1,2 respectively and substituting in \eqref{sum_euler}, leads to $N_{(4-PAM)}= 2+ 4(0+1+2) =14.$
\end{example}
\begin{example}
For 8-PAM signal set the singular fade states with $z>1$ are shown in Table.\ref{prifactor_pam}. For each such $z$ given in the table there exists singular fade states $-z,\frac{1}{z}$ and $-\frac{1}{z}.$ Hence, totally, there are 70 ($ 2+4(0+1+2+2+4+2+6) $) singular fade states.
 \begin{figure}[htbp]
\centering
\subfigure[$16-$QAM constellation]{
\includegraphics[totalheight=2.5in,width=2.5in]{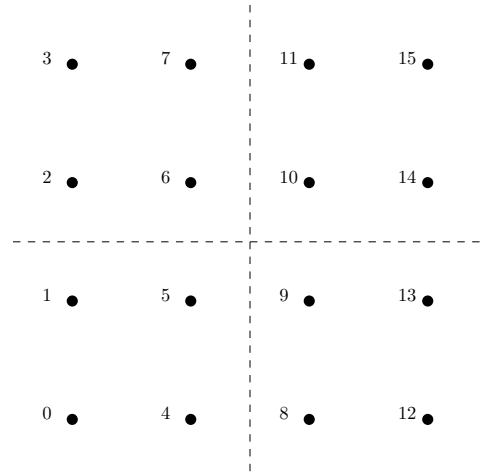}
\label{fig:16qam}
}
\subfigure[The Difference Constellation]{
\includegraphics[totalheight=2.7in,width=2.75in]{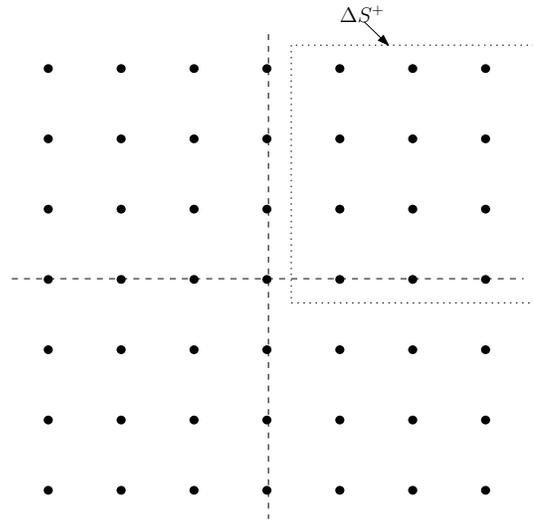}
\label{fig:16qam_diff_cons}
}
\caption{$16-$QAM constellation and its difference constellation}
\label{fig:qam}
\end{figure}
\begin{table}
\centering
 \caption{Singular fade states for 8-PAM}
 \label{prifactor_pam}
\begin{tabular}{|c|c|c|c|}
\hline $n$, Elements  & Relative primes & Singular fade & $\psi(n)$\\ 
 in $\Delta{S}^{+}$ & less than $n$  & states, $z >1$& \\
\hline 1 &   &   & 0\\ 
\hline 2 & 1 & 2 & 1\\
\hline 3 & 1,2 & 3,$\frac{3}{2}$ & 2\\
\hline 4 & 1,3 & 4,$\frac{4}{3}$ & 2\\
\hline 5 & 1,2,3,4 & 5,$\frac{5}{2},\frac{5}{3},\frac{5}{4}$ & 4\\
\hline 6 & 1,5 & 6,$\frac{6}{5}$ & 2\\
\hline 7 & 1,2,3,4,5,6 & 7,$\frac{7}{2},\frac{7}{3},\frac{7}{4},\frac{7}{5},\frac{7}{6}$ & 6\\
\hline 
\end{tabular}
\vspace{-.1 cm}
\end{table}
\end{example}
\subsection{Singular Fade States for QAM signal sets}
\label{subsec_2_2}
We consider square $M$-QAM signal set $\mathcal{S}=\{A_{mI}+jA_{mQ}\}$ where $A_{mI}$ and $A_{mQ}$ take values from the $\sqrt M$-PAM signal set $ -(\sqrt M -1)+ 2n,  ~~~  n \in (0, \cdots ,\sqrt M -1).$ We use the mapping $\mu: \mathcal{S} \rightarrow \mathbb{Z}_m$ given by 
{\small
\begin {equation}
\label{mumap}
A_{mI}+jA_{mQ} \rightarrow \frac{1}{2}[(\sqrt M -1 +A_{mI})\sqrt M + (\sqrt M -1 +A_{mQ})]
\end{equation}
}
\noindent
for concreteness and our analysis and results hold for any map.  The difference constellation $\Delta\mathcal{S}$ of square QAM signal sets form a part of scaled integer lattice with $(2\sqrt M -1)^2$ points. The 16-QAM signal set with the above mapping and its difference constellation is shown in Fig.\ref{fig:16qam} and in Fig.\ref{fig:16qam_diff_cons}. 
\begin{figure}[htbp]
\centering
\vspace{-.5cm}
\subfigure[$4-$QAM constellation]{
\includegraphics[totalheight=1.1in,width=1.1in]{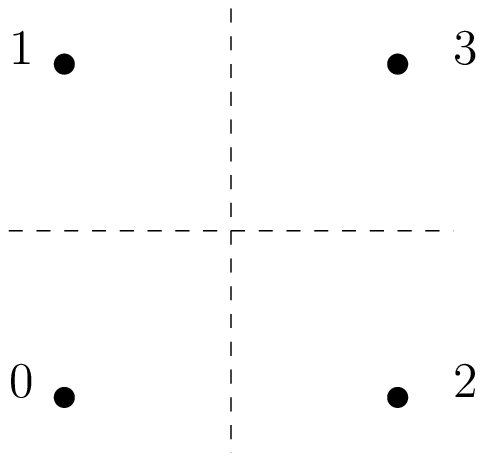}
\label{fig:4qam}
}
\subfigure[The Difference Constellation]{
\includegraphics[totalheight=2.2in,width=2.2in]{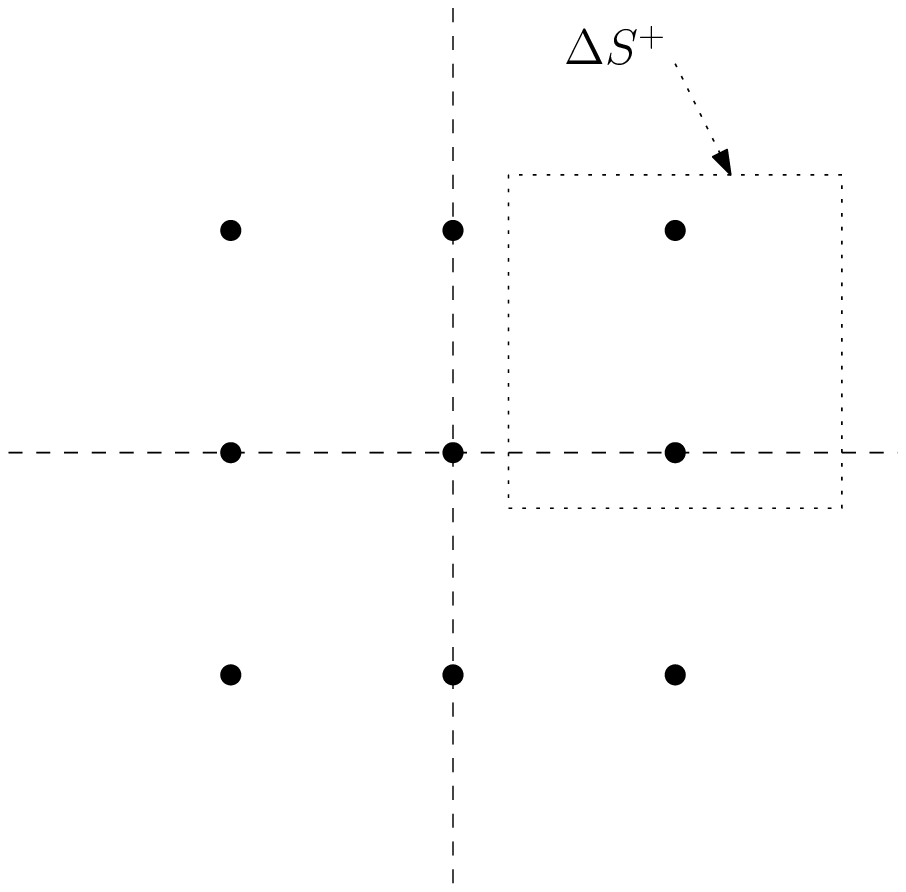}
\label{fig:4qamdiffcons}
}
\subfigure[Singular fade states]{
\includegraphics[totalheight=1.8in,width=2in]{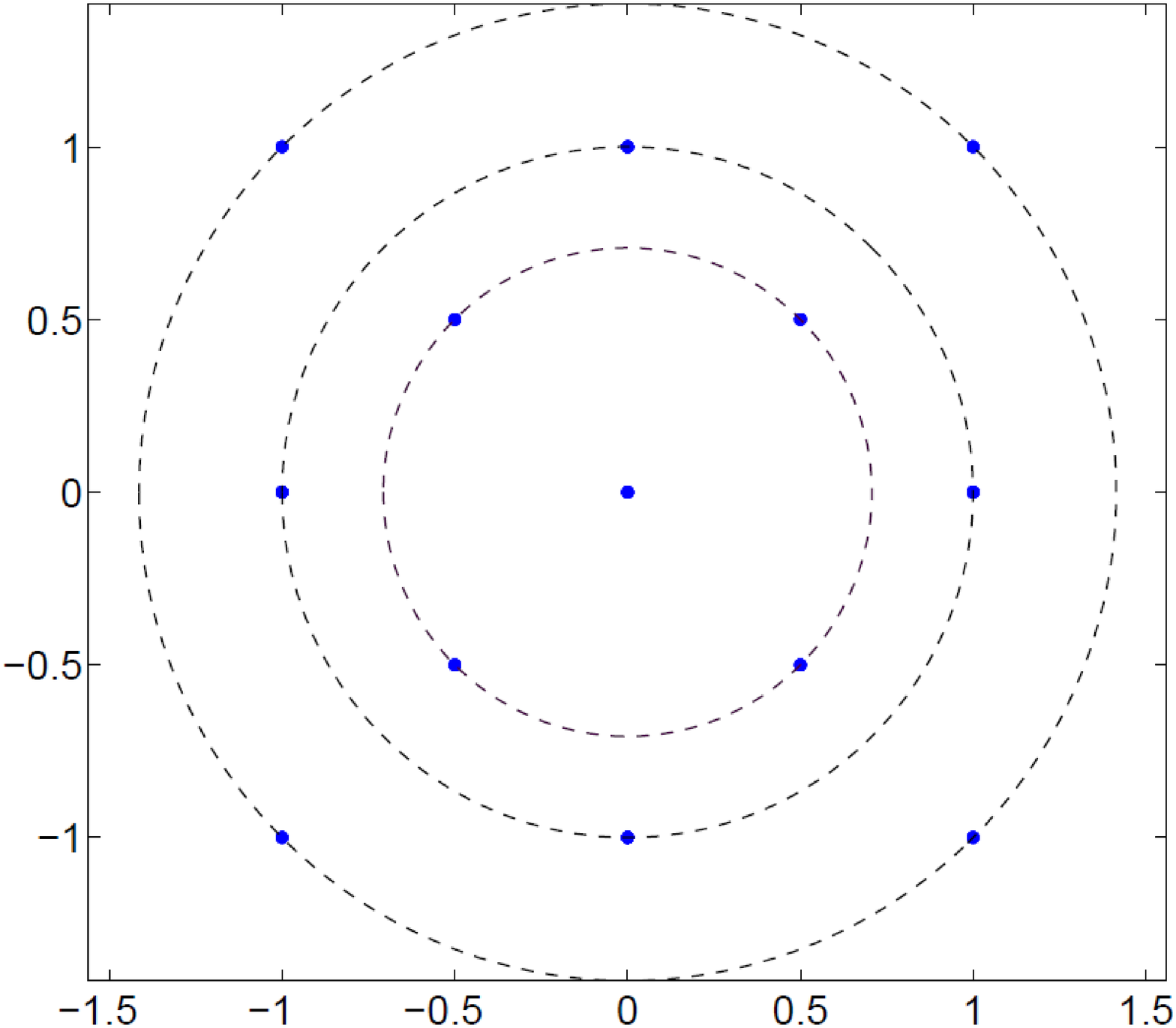}
\label{fig:4qam_sfs}
}
\caption{$4-$QAM constellation, its  difference constellation and singular fade states}
\label{fig:4QAM}
\end{figure}
\begin{definition}
\cite{LB} The Gaussian integers are the elements of the set $\mathbb{Z}[j]=\{a+bj : a,b \in \mathbb{Z}\}$ where $\mathbb{Z}$ denotes the set of integers. 
\end{definition}

The signal points in the difference constellation are Gaussian integers. To get the number of singular fade states for square QAM signal sets, the notion of primes and relatively primes in the set of Gaussian integers is useful.

\begin{definition}
\cite{LB} A Gaussian integer $\alpha$ is called a Gaussian prime if and only if the only Gaussian integers that divide $\alpha$ are: $1,-1,j,-j,\alpha,-\alpha, \alpha j$ and $-\alpha j.$ The Gaussian integers which are invertible in $\mathbb{Z}[j]$ are called units in $\mathbb{Z}[j]$ and they are $\pm 1$ and $\pm j.$  Let $\alpha, \beta \in \mathbb{Z}[j]$. If the only common divisors of $\alpha$ and $\beta$ are units, we say $\alpha$ and $\beta$ are relatively prime.
\end{definition}

\begin{lemma}
\label{no_sfs_qam}
The number of singular fade states for the square $M$-QAM signal set, denoted by
$N_{M-QAM}$ is given by $$N_{M-QAM} = 4+ 8 \phi(\Delta{S}^{+})$$ where $\phi(\Delta{S}^{+})$ is the number of relative prime pairs in $\Delta{S}^{+}.$
\end{lemma}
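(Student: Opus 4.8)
The plan is to reproduce the counting argument of Lemma~\ref{no_sing_pam} inside the ring $\mathbb{Z}[j]$ of Gaussian integers, which is a Euclidean domain and hence a unique factorization domain, so that greatest common divisors and ``lowest terms'' behave exactly as over $\mathbb{Z}$. By Lemma~\ref{sfs} a singular fade state is any ratio $d_k/d_l$ of nonzero points of $\Delta\mathcal{S}$, and ratios are unaffected by a global real scaling, so we may replace $\Delta\mathcal{S}$ by its scaled version, namely the set of \emph{all} Gaussian integers $a+bj$ with $|a|,|b|\le\sqrt M-1$ — the $(2\sqrt M-1)^2$ integer points of a square centred at the origin. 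The first step is to record two elementary facts about the unit group $U=\{1,-1,j,-j\}$: (i) this square is invariant under multiplication by each $u\in U$ (a rotation by a multiple of $\pi/2$), and the four sets $u\cdot\Delta\mathcal{S}^{+}$, $u\in U$, partition $\Delta\mathcal{S}\setminus\{0\}$, so every nonzero $d\in\Delta\mathcal{S}$ is $d=u\,d^{+}$ for a unique $u\in U$ and a unique $d^{+}\in\Delta\mathcal{S}^{+}$; and (ii) the only unit $u$ with $u\,\alpha\in\Delta\mathcal{S}^{+}$ for a given $\alpha\in\Delta\mathcal{S}^{+}$ is $u=1$, since multiplying a number of positive real part and nonnegative imaginary part by $-1$, $j$ or $-j$ destroys one of these two conditions.

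The second step is to bring an arbitrary singular fade state into a canonical form. Writing $d_k=u_k p$, $d_l=u_l q$ with $p,q\in\Delta\mathcal{S}^{+}$ gives $z=(u_k u_l^{-1})(p/q)$. Put $g=\gcd(p,q)$ in $\mathbb{Z}[j]$ and $\alpha=p/g$, $\beta=q/g$, which are relatively prime. The key point is that $\alpha$ and $\beta$ again lie in the square: if $g$ is a unit this is the rotation invariance in (i), while if $g$ is a non-unit then its norm $N(g)=|g|^2\ge 2$, so $N(\alpha)=N(p)/N(g)\le N(p)/2\le(\sqrt M-1)^2$, which forces $|\mathrm{Re}\,\alpha|\le\sqrt M-1$ and $|\mathrm{Im}\,\alpha|\le\sqrt M-1$; similarly for $\beta$. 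Replacing $\alpha,\beta$ by their first-quadrant representatives $\alpha^{+},\beta^{+}\in\Delta\mathcal{S}^{+}$ as in (i) and absorbing the two rotations into the leading unit, we obtain $z=u\,\alpha^{+}/\beta^{+}$ with $u\in U$ and $\alpha^{+},\beta^{+}\in\Delta\mathcal{S}^{+}$ relatively prime. Here $\alpha^{+}=\beta^{+}$ forces $\alpha^{+}=\beta^{+}=1$ (the only unit lying in $\Delta\mathcal{S}^{+}$) and hence $z=u$ a pure unit; otherwise $z$ is not a unit.

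The third step is uniqueness together with the count. If $u_1\alpha_1^{+}/\beta_1^{+}=u_2\alpha_2^{+}/\beta_2^{+}$ with both pairs relatively prime in $\Delta\mathcal{S}^{+}$, then cross-multiplying and using unique factorization in $\mathbb{Z}[j]$ shows $\alpha_1^{+}$ and $\alpha_2^{+}$ are associates and $\beta_1^{+}$ and $\beta_2^{+}$ are associates; by (ii) the connecting units are $1$, so $\alpha_1^{+}=\alpha_2^{+}$, $\beta_1^{+}=\beta_2^{+}$, and then $u_1=u_2$. Conversely every $u\,\alpha^{+}/\beta^{+}$ with $u\in U$ and $\alpha^{+},\beta^{+}\in\Delta\mathcal{S}^{+}$ relatively prime is realized as $(u\alpha^{+})/\beta^{+}$ with $u\alpha^{+},\beta^{+}\in\Delta\mathcal{S}$, hence is a singular fade state. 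Thus the singular fade states are in bijection with: the four pure units ($\alpha^{+}=\beta^{+}=1$), giving the term $4$; and the triples $(u,\alpha^{+},\beta^{+})$ with $u\in U$ and $(\alpha^{+},\beta^{+})$ an ordered pair of \emph{distinct} relatively prime elements of $\Delta\mathcal{S}^{+}$, of which there are $4\cdot 2\phi(\Delta\mathcal{S}^{+})=8\phi(\Delta\mathcal{S}^{+})$. Adding the two contributions gives $N_{M-QAM}=4+8\phi(\Delta\mathcal{S}^{+})$.

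The one step I expect to require genuine care rather than bookkeeping is the norm estimate in the second step: over $\mathbb{Z}[j]$ a divisor can rotate as well as shrink, so it is not obvious that reducing a fraction of Gaussian integers keeps numerator and denominator inside the square $\{|a|,|b|\le\sqrt M-1\}$. It works precisely because a non-unit Gaussian integer has norm at least $2$ — exactly the factor needed to pass from the bound $N\le 2(\sqrt M-1)^2$ that holds on the square to the bound $N\le(\sqrt M-1)^2$ that controls each coordinate separately. Everything else runs in close parallel with the PAM proof, with $\mathbb{Z}$ replaced by $\mathbb{Z}[j]$ and the sign group $\{\pm 1\}$ replaced by the unit group $U=\{1,-1,j,-j\}$, which is why the multiplier $4$ of Lemma~\ref{no_sing_pam} becomes $8$ and the additive constant $2$ becomes $4$.
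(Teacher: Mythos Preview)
Your proof is correct and follows the same counting strategy as the paper: exploit the fourfold rotational symmetry of $\Delta\mathcal{S}$ to reduce to ratios of relatively prime pairs in $\Delta\mathcal{S}^{+}$, note that each unordered pair contributes two ratios, and add four for the units. The paper's argument is a brief sketch that tacitly assumes every singular fade state can be written as a unit times a ratio of relatively prime elements of $\Delta\mathcal{S}^{+}$; your norm estimate in step two (that dividing by a non-unit Gaussian integer shrinks the norm by at least a factor of $2$, forcing the reduced numerator and denominator back into the square) is precisely the justification the paper omits, so your version is a rigorous completion of the same idea rather than a different route.
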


\begin{proof}
All the possible ratios of elements from $\Delta{S}$ give singular fade states. We consider only ratios in $\Delta{S}^+$ and multiply the number of possible such ratios with a factor of 4 to account the ratios with points in all the other quadrants. To avoid multiplicity while counting we take only relative prime pairs in $\Delta{S}^+$ and one such pair $(a,b)$ gives two singular fade states $a/b$ and $b/a$. Because of this the multiplication factor becomes 8. Finally, the factor 4 is added to count the units.
\end{proof}
\begin{example}
For a 4-QAM signal set shown in Fig.\ref{fig:4qam} the number of singular fade states, $N(z_{4-QAM})$ is given by $4+ 8.1=12$. Scaled $\Delta{S}^{+}$ have only two elements $\{1,1+j\}$ in this case as shown in Fig.\ref{fig:4qamdiffcons}. They form one relatively prime pair. The singular fade states $\pm1,\pm j,\pm1\pm j,\frac{1}{\pm1\pm j}$ is shown in Fig. \ref{fig:4qam_sfs}.
\end{example}
\begin{table}
\centering
 \caption{Prime factors of Gaussian integers in $\Delta{S}^+$}
 \label{prifactor}
\begin{tabular}{|c|c|c|}
\hline Elements in $\Delta{S}^{+}$ & Prime factors  & No. of relatively prime pairs\\ 
\hline 1 &  1 & 11\\ 
\hline 1+j& 1+j & 6\\
\hline 2& 1+j & 6\\
\hline 1+2j& 1+2j & 10\\
\hline 2+j& 2+j & 10\\
\hline 2+2j& 1+j & 6\\
\hline 3& 3 & 10\\
\hline 3+j& (1+j),(1+2j)& 5\\
\hline 1+3j& (1+j),(2+j)& 5\\
\hline 3+2j& 3+2j & 11\\
\hline 2+3j& 2+3j & 11\\
\hline 3+3j& (1+j),3 & 5\\
\hline
\end{tabular}
\vspace{-.1 cm}
\end{table}

\begin{example}
Consider the case of 16-QAM signal set. Table \ref{prifactor} discusses the prime factorization of the elements in $\Delta{S}^{+}$. From the table there are 96 relatively prime pairs, but it counts the pair $(a,b)$ and $(b,a)$ separately. So there are 48 distinct pairs of relative primes, and from Lemma \ref{no_sfs_qam}, $N_{16-QAM}$ turns to be 388. The singular fade states of 16-QAM is shown in Fig.\ref{fig:16QAM_sing}.
\end{example}
\begin{table}[htbp]
\centering
 \caption{Comparison between $M$-PSK and $M$-QAM on number of singular fade states}
 \label{comp_psk_qam}
\begin{tabular}{|c|c|c|}
\hline $M$  & No. of singular fade & No. of singular fade \\ 
            & states for $M$-PSK &  states for $M$-QAM\\
\hline 4 &  12 & 12\\ 
\hline 16 & 912 & 388\\
\hline 64 & 63,552 & 8388\\
\hline
\end{tabular}
\vspace{-.1 cm}
\end{table}
\subsection{Singular fade states of $M$-PSK and $M$-QAM signal sets}
\label{subsec_3_2}
In this section we show that the number of singular fade states for $M$-QAM signal sets is lesser in comparison with that of $M$-PSK signal sets. The advantages of this are two fold- QAM offers better distance performance and it requires lesser number of overhead bits since the required number of relay clusterings are lesser in the case of QAM compared with PSK.
\begin{lemma}
\label{upper_qam}
The number of singular fade states for $M$-QAM signal set is upper bounded by $4(n^2-n+1)$, where $n=\dfrac{[(2 \sqrt M -1)^2-1]}{4},$ which is same as $4(M^2)-(2M-1)\sqrt{M} +1).$ 
\end{lemma}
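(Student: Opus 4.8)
The strategy is to plug the exact count of Lemma~\ref{no_sfs_qam}, $N_{M-QAM}=4+8\,\phi(\Delta{S}^{+})$, into the crudest conceivable estimate of $\phi(\Delta{S}^{+})$: at most \emph{every} pair of distinct points of $\Delta{S}^{+}$ can be relatively prime. Thus the only real tasks are to count $|\Delta{S}^{+}|$ and to invoke this trivial combinatorial bound.

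First I would compute $|\Delta{S}^{+}|$. By the description in Subsection~\ref{subsec_2_2}, $\Delta{S}$ for square $M$-QAM is a scaled copy of the $(2\sqrt{M}-1)\times(2\sqrt{M}-1)$ square grid of Gaussian integers centred at the origin, so $|\Delta{S}|=(2\sqrt{M}-1)^{2}$. Deleting the origin leaves $(2\sqrt{M}-1)^{2}-1$ nonzero points, on which multiplication by the units $\{1,j,-1,-j\}$ acts freely; this partitions them into orbits of size $4$, and $\Delta{S}^{+}=\{\alpha:\ \mathrm{real}(\alpha)>0,\ \mathrm{imaginary}(\alpha)\ge 0\}$ selects exactly one representative from each orbit. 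Hence
\[
 |\Delta{S}^{+}| \;=\; n \;=\; \frac{(2\sqrt{M}-1)^{2}-1}{4} \;=\; M-\sqrt{M}.
\]
Next, since $\phi(\Delta{S}^{+})$ counts unordered relatively-prime pairs of \emph{distinct} elements of $\Delta{S}^{+}$, and each such pair is in particular a $2$-element subset of $\Delta{S}^{+}$, we get $\phi(\Delta{S}^{+})\le\binom{n}{2}=\tfrac{1}{2}n(n-1)$. Substituting into Lemma~\ref{no_sfs_qam},
\[
 N_{M-QAM}\;=\;4+8\,\phi(\Delta{S}^{+})\;\le\;4+8\cdot\tfrac{1}{2}n(n-1)\;=\;4\bigl(n^{2}-n+1\bigr).
\]
Finally, expanding $n=M-\sqrt{M}$ rewrites $4(n^{2}-n+1)$ as $4\bigl(M^{2}-(2M-1)\sqrt{M}+1\bigr)$, the stated closed form.

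There is no deep obstacle here. The one place that needs care is the orbit-counting step: one must check that $\Delta{S}^{+}$, with its strict inequality on the real part and non-strict inequality on the imaginary part, meets each size-$4$ orbit in exactly one point, so that no nonzero lattice point is lost on a ``boundary'' or counted twice; this reduces to a routine check of the eight sign/zero cases for $(\mathrm{real}(\alpha),\mathrm{imaginary}(\alpha))$. It is also worth noting that the bound is attained at $M=4$ (there $n=2$ and $\{1,1+j\}$ is a relatively prime pair, so $\phi(\Delta{S}^{+})=1=\binom{2}{2}$), whereas for larger $M$ it is strict because $\Delta{S}^{+}$ then contains many pairs sharing a common Gaussian-prime factor (e.g.\ $2$ and $1+j$ when $M=16$).
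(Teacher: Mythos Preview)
Your proof is correct and follows essentially the same route as the paper: count $|\Delta S^{+}|=n=\frac{(2\sqrt{M}-1)^2-1}{4}$, bound $\phi(\Delta S^{+})$ by the trivial estimate $\binom{n}{2}$, and substitute into Lemma~\ref{no_sfs_qam}. Your write-up is in fact more careful than the paper's own proof (the orbit argument for $|\Delta S^{+}|$, the explicit expansion to the closed form, and the tightness remark at $M=4$ are all additions), but the underlying idea is identical.
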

\begin{proof}
There are $[(2 \sqrt M -1)^2-1]$ non zero signal points in $\Delta{S}$ which are distributed equally in each quadrant, i.e., the number of signal points in  $\Delta{S}^+$, $\dfrac{[(2 \sqrt M -1)^2-1]}{4}$ which we denote by $n.$  
The maximum number of relatively prime pairs in a set of $n$ Gaussian integers is $\frac{n(n-1)}{2}$. Since an upper bound is of interest we substitute this in Lemma \ref{no_sfs_qam} instead of $\phi(\Delta{S}^+).$ This completes the proof.
\end{proof}

The number of singular fade states for $M$-QAM signal set is lesser in comparison with that of $M$-PSK signal sets. In \cite{VNR} it is shown that the number of singular fade states for $M$-PSK signal set is $M(\frac{M^2}{4}-\frac{M}{2}+1)$, in $\mathcal{O}(M^3)$. From Lemma \ref{upper_qam}, an upper bound on the number of singular fade states for $M$-QAM is in $\mathcal{O}(M^2).$

\begin{example}
The singular fade states of 16-PSK signal set is given in Fig.\ref{fig:16psk_sing}. There are 912 singular fade states in total.
\end{example}

The advantage of square QAM constellation is highly effective in higher order constellations, for example 64-QAM is having 8,388 singular fade states where as a 64-PSK has 63,552 singular fade states and relay has to adaptively use 63,552 clusterings. With the use of square QAM constellations the complexity is enormously reduced. 
\begin{figure*}[]
\centering
\hspace{2cm}
\subfigure[16-QAM ]{
\includegraphics[totalheight=2.3in,width=2.3in]{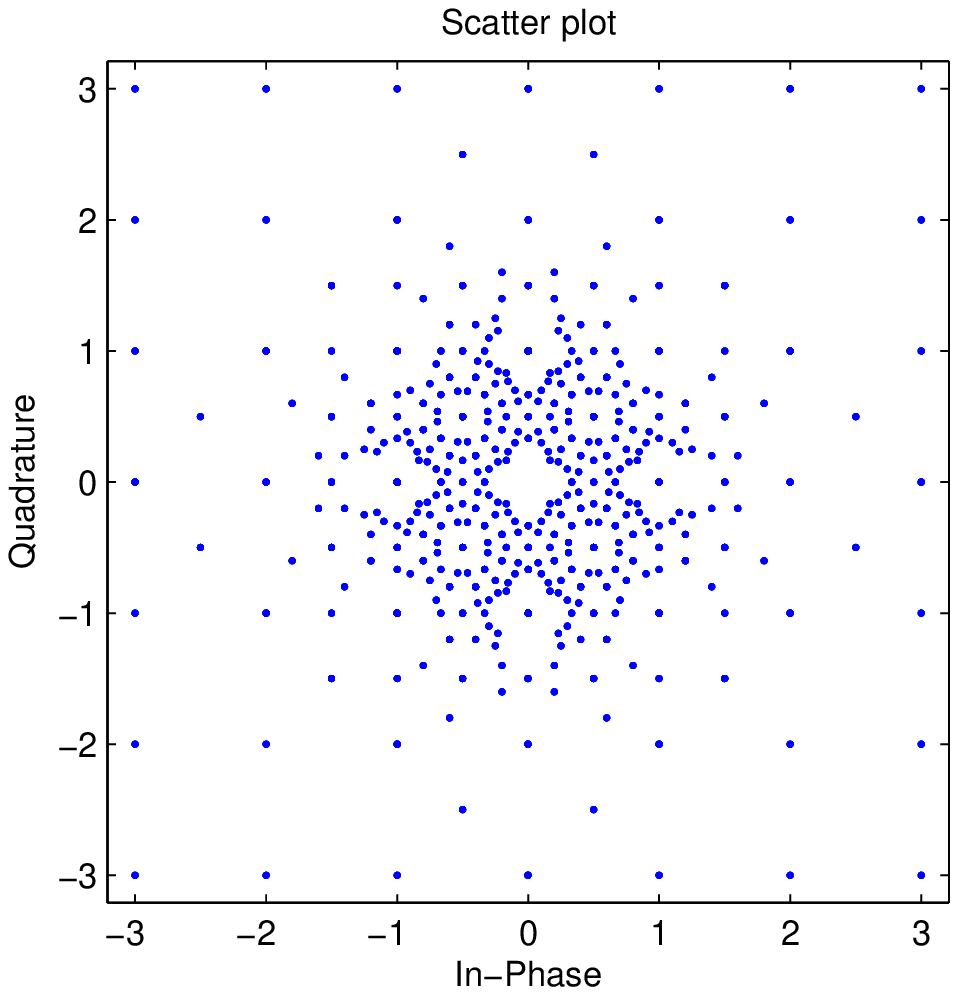}
\label{fig:16QAM_sing}
}
\qquad
\hspace{-1.5cm}
\subfigure[ 16-PSK ]{
\includegraphics[totalheight=2.25in,width=4in]{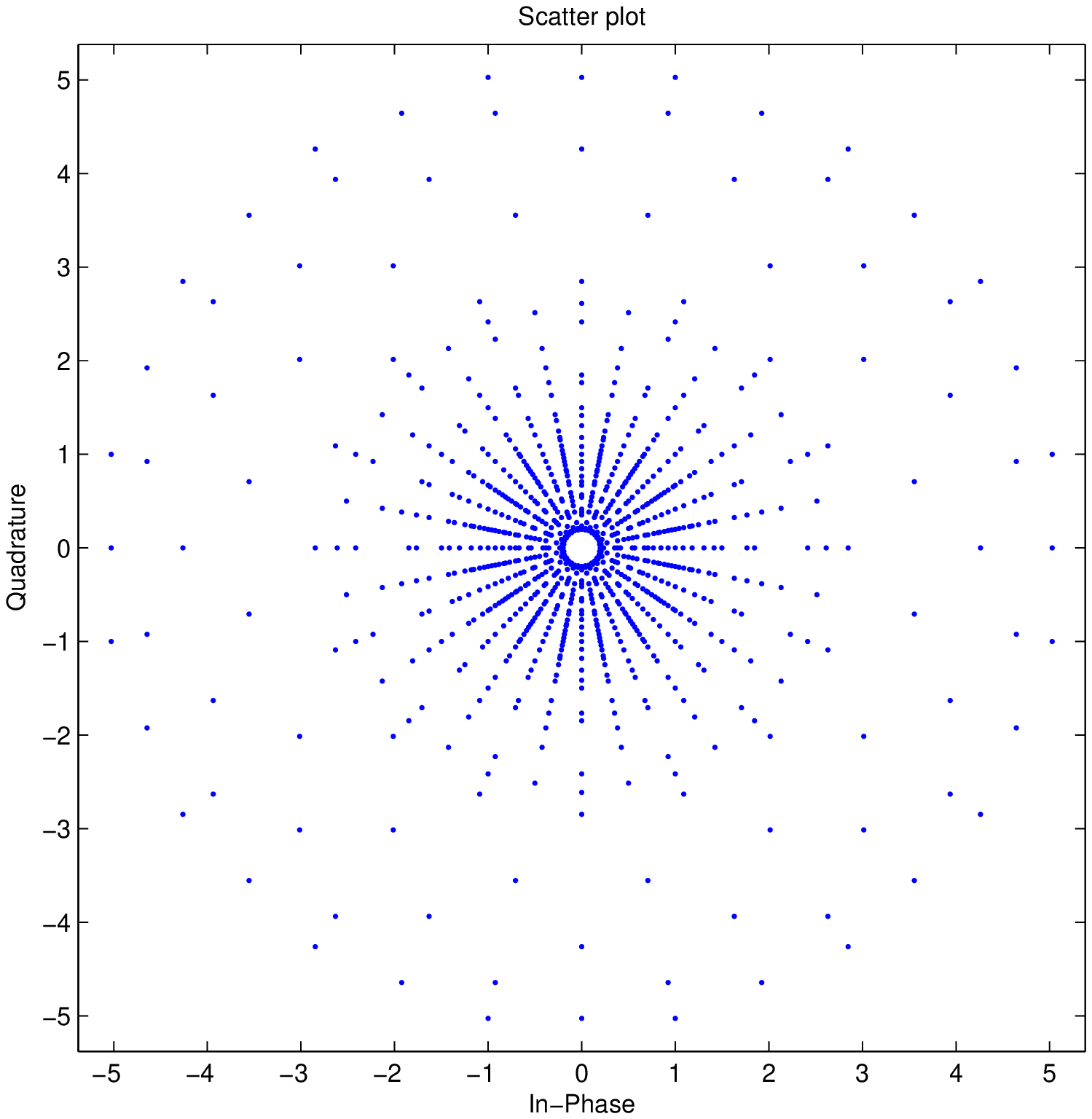}
\label{fig:16psk_sing}
}
\caption{Singular Fade States for $16-$QAM and $16-$PSK modulation schemes}
\end{figure*}

\section{Exclusive Law and Latin Squares}
\label{sec3}
\begin{definition} \cite{Rod} A Latin Square L of order $M$ with the symbols from the set $\mathbb{Z}_t=\{0,1, \cdots ,t-1\}$ is an \textit{M} $\times$ \textit{M}  array, in which each cell contains one symbol and each symbol occurs at most once in each row and column. 
\end{definition}

 In \cite{NVR} it is shown that when the end nodes use signal sets of same size all the relay clusterings which satisfy exclusive-law can be equivalently representable by Latin Squares, with the rows (columns) indexed by the constellation point used by node A (B) and the clusterings are obtained by taking all the slots in Latin Squares which are mapped to the same symbol in one cluster.
 \subsection{Removing Singular fade states and Constrained Latin Squares}
The minimum size of the constellations needed in the  BC phase is $M,$ but it is observed that in some cases relay may not be able to remove the singular fade states with $t=M$ and $t > M$ results in severe performance degradation in the MA phase \cite{APT1}. Let $(k,l)(k^{\prime},l^{\prime})$ be the pairs which give same point in the effective constellation $\mathcal{S}_R$ at the relay for a singular fade state, where $k,k^{\prime},l,l^{\prime} \in \{0,1,....,M-1\}$ and $k,k^{\prime}$ are the constellation points used by node A and $l,l^{\prime}$ are the corresponding constellation points used by node B. If they are not clustered together, the minimum cluster distance will be zero. To avoid this, those pairs should be in same cluster. This requirement is termed as a {\it singularity-removal constraint}. So, we need to obtain Latin Squares which can remove singular fade states and with minimum value for $t.$ Towards this end, initially we fill the slots in the $\textit{M}\times\textit{M}$ array  such that for the slots corresponding to a singularity-removal constraint the same element is used to fill slots. This removes that particular singular fade state. Such a partially filled Latin Square is called a {\it Constrained Partially Latin Square} (CPLS). After this,  to make this a Latin Square, we try to fill the other slots of the CPLS with minimum number of symbols.
\begin{definition} A Latin Square $L^T$ is said to be the Transpose of a Latin Square $L$, if $L^T(i,j)=L(j,i)$ for all $i,j \in \{0,1,2,..,M-1\}.$
\end{definition}
\begin{lemma}
For any constellation, if the Latin Square $L$ removes the singular fade state $z$ then the  Latin Square $L^T$ will remove the singular fade state $z^{-1}.$
\end{lemma}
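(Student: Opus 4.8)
The plan is to unwind the phrase ``$L$ removes the singular fade state $z$'' into a statement purely about which index pairs share a cluster, and then to observe that transposing the Latin Square merely interchanges the roles of the two end nodes. Recall from the discussion preceding the lemma that the clustering $\mathcal{C}$ associated with $L$ has its rows indexed by $x_A$ and its columns by $x_B$, and that $\mathcal{C}$ removes $z$ precisely when $d_{min}^{\mathcal{C}}(z)>0$; equivalently, for every pair of distinct index pairs $(x_A,x_B)$ and $(x'_A,x'_B)$ assigned the same symbol by $L$, one has $(x_A-x'_A)+z(x_B-x'_B)\neq 0$. Note also that since $z=\gamma e^{j\theta}$ with $\gamma>0$, the scalar $z$ is nonzero and $z^{-1}$ is well defined; moreover $z^{-1}$ is again a ratio of difference-constellation points, hence itself a singular fade state by Lemma \ref{sfs}.

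First I would characterise the clusters of $L^{T}$. By definition $L^{T}(i,j)=L(j,i)$, so index pairs $(a,b)$ and $(a',b')$ lie in a common cluster of $L^{T}$ if and only if $L(b,a)=L(b',a')$, i.e.\ if and only if $(b,a)$ and $(b',a')$ lie in a common cluster of $L$; and $(a,b)\neq(a',b')$ exactly when $(b,a)\neq(b',a')$. Then I would apply the hypothesis on $L$ to the swapped pair: given distinct $(a,b),(a',b')$ in a common cluster of $L^{T}$, the distinct pairs $(b,a),(b',a')$ share a cluster of $L$, so $(b-b')+z(a-a')\neq 0$. Multiplying this nonzero quantity by the nonzero scalar $z^{-1}$ gives $(a-a')+z^{-1}(b-b')\neq 0$, which is exactly the condition that $(a,b)$ and $(a',b')$ do not collide at the fade state $z^{-1}$. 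Since the pair was arbitrary, the clustering induced by $L^{T}$ has positive minimum cluster distance at $z^{-1}$, i.e.\ $L^{T}$ removes $z^{-1}$.

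There is no real obstacle in this argument; it is essentially a change of variables. The only points requiring a little care are the bookkeeping of the transpose --- making sure that ``same cluster in $L^{T}$'' is correctly rewritten as ``same cluster in $L$ after swapping the two coordinates'' --- and recording that $z\neq 0$, so that multiplying the strict inequality through by $z^{-1}$ is legitimate. One may additionally remark that $L^{T}$ is itself a Latin Square and hence a valid exclusive-law clustering, but this is immediate from the definition of the transpose.
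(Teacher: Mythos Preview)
Your proof is correct and follows essentially the same idea as the paper's: both rest on the observation that swapping the roles of A and B (i.e., transposing the Latin Square) converts a singularity-removal constraint for $z$ into one for $z^{-1}$. The paper argues via the constraint formulation (pairs that coincide at $z$ become, after swapping coordinates, exactly the pairs that coincide at $z^{-1}$), whereas you argue the contrapositive directly from the definition $d_{min}^{\mathcal{C}}(z)>0$; your write-up is more explicit about the bookkeeping and the use of $z\neq 0$, but the underlying argument is the same.
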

\begin{proof}
Let the singular fade state $z$ as given in \eqref{sing_expression} with constraint $\{(x_A,x_B),(x_A^{\prime},x_B^{\prime})\}.$ Then, by taking the inverse
\begin{align*}
z^{-1}=\dfrac{x_B^{\prime}-x_B}{x_A-x_A^{\prime}}.
\end{align*}
\noindent
Now the constraints are modified to $\{(x_B,x_A),(x_B^{\prime},x_A^{\prime})\},$ i.e., the role of node A and node B are interchanged, which clearly results in the transpose of the Latin Square.
\end{proof}

From the above lemma, it is clear that we have to get Latin Squares only for singular fade states $|z| \leq 1$ or $|z| \geq 1$. 

The square QAM signal set has a symmetry which is $\pi/2$ degrees of rotation. This results in a reduction of the number of required Latin Squares by a factor 4 as shown in the following lemma. 

\begin{lemma}
If $L$ is a Latin Square that removes a singular fade state $z$, then there exist a column permutation of $L$ such that the permuted Latin Square $L^\prime$ removes the singular fade state $z e^{j \pi/2}.$
\end{lemma}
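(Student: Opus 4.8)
The plan is to exhibit the required column permutation explicitly, exploiting the $\pi/2$ rotational symmetry of the square QAM constellation. Since $\mathcal{S}$ is square $M$-QAM, it is closed under multiplication by $j=e^{j\pi/2}$, hence also by $-j$; thus $w\mapsto -jw$ is a bijection of $\mathcal{S}$ onto itself, with inverse $w\mapsto jw$. I would first restate the hypothesis in the form used in the discussion preceding the lemma: ``$L$ removes $z$'' means that for every singularity-removal constraint $\{(x_A,x_B),(x_A',x_B')\}$ of $z$ --- i.e.\ every quadruple with $z=(x_A-x_A')/(x_B'-x_B)$, equivalently every pair of signal-point pairs that collide in $\mathcal{S}_R(\gamma,\theta)$ --- the slots $(x_A,x_B)$ and $(x_A',x_B')$ of $L$ carry the same symbol, so that they lie in one cluster.

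The key algebraic step is to see how a constraint is transformed when $z$ is replaced by $ze^{j\pi/2}$. Using $1/j=-j$, from $z=(x_A-x_A')/(x_B'-x_B)$ one obtains
\[
z e^{j\pi/2}\;=\;\frac{x_A-x_A'}{(x_B'-x_B)/j}\;=\;\frac{x_A-x_A'}{(-jx_B')-(-jx_B)},
\]
so that $\{(x_A,-jx_B),(x_A',-jx_B')\}$ is a singularity-removal constraint for $ze^{j\pi/2}$. Because $w\mapsto -jw$ is a bijection of $\mathcal{S}$, the assignment $(x_A,x_B)\mapsto(x_A,-jx_B)$, applied to both members of a constraint, carries the set of constraints of $z$ into the set of constraints of $ze^{j\pi/2}$; running the same one-line computation backwards (using $1/(-j)=j$) shows the map $(a,b)\mapsto(a,jb)$ is its inverse, so this is in fact a bijection between the two constraint sets.

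With this correspondence in hand, I would define $L'$ to be the array obtained from $L$ by the column permutation that moves the column indexed by the QAM point $x_B$ to the column indexed by $-jx_B$ (acting on $\mathbb{Z}_M$ through $\mu$). A column permutation of a Latin Square is again a Latin Square, so $L'$ is legitimate and uses the same symbol set as $L$. Now take any constraint $\{(x_A,-jx_B),(x_A',-jx_B')\}$ of $ze^{j\pi/2}$; by the construction of $L'$ we have $L'(x_A,-jx_B)=L(x_A,x_B)$ and $L'(x_A',-jx_B')=L(x_A',x_B')$, and these two symbols are equal because $\{(x_A,x_B),(x_A',x_B')\}$ is a constraint of $z$ and $L$ removes $z$. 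Since every constraint of $ze^{j\pi/2}$ arises this way, $L'$ satisfies all singularity-removal constraints of $ze^{j\pi/2}$, i.e.\ $L'$ removes $ze^{j\pi/2}$.

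The one point that needs care --- and essentially the only place the argument could go wrong --- is the choice of sign in the displayed identity. Writing instead $ze^{j\pi/2}=(x_A-x_A')/((jx_B)-(jx_B'))$ would pair $x_A$ with $jx_B'$ and $x_A'$ with $jx_B$, a ``crossed'' relabeling of the column indices that is not realizable by a single column permutation. It is precisely the choice $x_B\mapsto -jx_B$ (a $-\pi/2$ rotation of the column labels, absorbed by the rotational symmetry of $\mathcal{S}$) that makes the relabeling uncrossed and hence a genuine column permutation, which is exactly what the lemma asserts; everything else is routine bookkeeping.
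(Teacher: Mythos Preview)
Your proposal is correct and follows essentially the same route as the paper: rewrite $ze^{j\pi/2}=(x_A-x_A')/\bigl((-jx_B')-(-jx_B)\bigr)$ using the $\pi/2$ rotational symmetry of square QAM, so that the constraints of $ze^{j\pi/2}$ are obtained from those of $z$ by the relabeling $x_B\mapsto -jx_B$ on the column index, which is precisely a column permutation of $L$. Your write-up is more explicit (naming the bijection between constraint sets, verifying that a column permutation preserves the Latin property, and flagging the sign choice that avoids a crossed relabeling), but the underlying argument is identical to the paper's.
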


\begin{proof}
For the singular fade state $z$ is given in \eqref{sing_expression} with constraint $\{(x_A,x_B),(x_A^{\prime},x_B^{\prime})\}$, the singular fade state $z e^{j \pi/2}$ is given by
\begin{align*}
z e^{j \pi/2}= \dfrac{[x_A-x_A^{\prime}]}{[x_B^{\prime}-x_B]}e^{j \pi/2}\\
\Longrightarrow z e^{j \pi/2}= \dfrac{[x_A -x_A^{\prime} ]}{[x_B^{\prime}e^{-j \pi/2}-x_B e^{-j \pi/2}]}.\\
\end{align*}
\noindent
Since in the square QAM constellation there exist signal points with $x_B^\prime e^{-j \pi/2}$ and $x_B e^{-j \pi/2}$, all the constraints are changed but the new constraints are obtainable from the permutation of signal points in the constellation used by node B. The columns of the Latin Squares are indexed by the signal points used by B and the effective permutation in the constellation is representable by column permutation in the Latin Square.
\end{proof}
 
Note that the fade state $z=1$ or $(\gamma=1, \theta=0)$ is a singular fade state for any signal set. 
\begin{definition} A Latin Square which removes the singular fade state $z=1$ for a signal set is said to be a standard Latin Square for that signal set. 
\end{definition} 

When the signal sets is a $2^\lambda$-PSK signal set then, in \cite{VNR} it has been shown that the Latin Square obtained by Exclusive-OR (XOR) is a standard Latin Square for any integer $\lambda.$ It turns out that for $M-$QAM signal sets the  Latin Square given by bitwise Exclusive-Or (XOR) is not a standard Latin Square for any $M>4.$ This can be easily seen as follows: Any square $M$-QAM signal set ($M>4$) has points of the form $a, a+jb, a-jb$, for some integers $a$ and $b$. For $z=1$, the effective constellation at R during the MA phase contains the point $2a$ can result in at least two different ways, since $2a=a+za= (a+jb)+z(a-jb)$ for $z=1$. Let $ l_1, l_2$ and $l_3$ denote the labels for $a, a+jb,$ and $a-jb$ respectively. For the singular fade state $z=1,$ we have $\lbrace(l_1,l_1),(l_2,l_3) \rbrace$ as a singularity removal constraint. But the Latin Square obtained by bitwise XOR mapping does not satisfy this constraint since $l_1\oplus l_1=0 \neq l_2\oplus l_3$. 

\subsection{Standard Latin Square for $\sqrt M-$PAM}
In this subsection, we obtain standard Latin Squares for $\sqrt M-$PAM signal sets. 
\begin{definition}
An $M \times M$ Latin square in which each row is obtained by a left cyclic shift of the previous row is called a left-cyclic Latin Square.
\end{definition}
\begin{lemma}
For a $\sqrt M$-PAM signal set a left-cyclic Latin Square removes the singular fade state $z=1$.  
\end{lemma}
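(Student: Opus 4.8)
The plan is to reduce the singular fade state $z=1$ to a purely combinatorial constraint on index pairs and then exhibit the addition table of $\mathbb{Z}_{\sqrt M}$ as the required left-cyclic Latin Square. Write $K=\sqrt M$ and index the $\sqrt M$-PAM points by $x_n=-(K-1)+2n$ for $n\in\{0,\dots,K-1\}$. Since $x_n+x_m=-2(K-1)+2(n+m)$, two pairs $(x_A,x_B)=(x_n,x_m)$ and $(x_A',x_B')=(x_{n'},x_{m'})$ satisfy $(x_A-x_A')+(x_B-x_B')=0$ at $z=1$ exactly when $n+m=n'+m'$. Hence, by Lemma \ref{sfs} and the singularity-removal discussion of Section \ref{sec3}, removing $z=1$ amounts to producing an order-$K$ Latin Square $L$ (rows indexed by node A's index $n$, columns by node B's index $m$) whose induced clustering places every two index pairs with the same coordinate-sum into one cluster.

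First I would take $L(n,m)=(n+m)\bmod K$, i.e.\ the Cayley table of $\mathbb{Z}_K$. This is a Latin Square, since each row and each column is a permutation of $\mathbb{Z}_K$, and row $n+1$ is the left cyclic shift of row $n$ because $(n+1,n+2,\dots,n+K)\equiv(n+1,\dots,n+K-1,n)\pmod K$; so $L$ is left-cyclic with first row $(0,1,\dots,K-1)$. Next I would verify the constraint: if $n+m=n'+m'$ then trivially $(n+m)\bmod K=(n'+m')\bmod K$, hence $L(n,m)=L(n',m')$ and the two pairs lie in the same cluster. Consequently no two pairs in distinct clusters satisfy $x_n+x_m=x_{n'}+x_{m'}$, which is precisely $d_{min}^{\mathcal{C}}(1)>0$; thus $L$ removes the singular fade state $z=1$.

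The only point needing care is that the clustering from $L$ is strictly coarser than the constraint demands: $L$ identifies pairs whose sums agree only modulo $K$, not merely pairs with equal sums. This is harmless for removing $z=1$, because a zero cluster distance at $z=1$ would require two pairs in \emph{different} clusters with equal sum, and the computation above forbids that regardless of how pairs with unequal sums are grouped; indeed any left-cyclic Latin Square $L(n,m)=\pi((n+m)\bmod K)$ with $\pi$ a permutation works for the same reason. I do not anticipate a genuine obstacle; the crux is simply the index-level reformulation of the $z=1$ constraint for PAM together with the observation that $L$ depends on $n,m$ only through $(n+m)\bmod K$. As a remark, this $L$ uses exactly $K=\sqrt M$ symbols, which is the minimum possible, so it also settles the minimum-$t$ question for this singular fade state.
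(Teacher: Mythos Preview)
Your proof is correct and follows essentially the same route as the paper: both reduce the $z=1$ constraint for $\sqrt M$-PAM to ``pairs with equal index-sum must share a cluster'' and then observe that a left-cyclic Latin Square handles this. Your version is a bit more explicit—you write down $L(n,m)=(n+m)\bmod K$ and verify the constraint by a one-line congruence, whereas the paper splits into the symmetry case $\{(k_1,l_1),(l_1,k_1)\}$ and the anti-diagonal case $\{(k_1,l_1),(k_1+m,l_1-m)\}$ and then appeals to the cyclic structure; your remark that the clustering is strictly coarser (sums matched only mod $K$) and your note on minimality are extras the paper does not spell out.
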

\begin{proof}
Consider the $\sqrt M$-PAM signal set with the signal points labelled from left to right as discussed in Section \ref{sec2}. Let $\{(k_1,l_1)(k_2,l_2)\}$ be a singularity removal constraint. To get the same point in the received constellation at the relay R, when $z=1,$ we have $k_1+l_1=k_2+l_2.$ Consider the following two cases satisfying this equality.
\noindent
Case (i): $k_2=l_1, l_2=k_1$ In this case the constraint becomes $\{(k_1,l_1)(l_1,k_1)\},$ i.e., the Latin Square which removes $z=1$ should be symmetric about main diagonal.\\
\noindent
Case (ii): $k_2=k_1+m, l_2=l_1-m$ for any $m \leq \sqrt M.$, The constraint now becomes $\{(k_1,l_1)(k_1+m,l_1-m)\}$ which means the symbol in $k_1$-th row and $l_1$-th column should be repeated in the $k_1+1$-th row and the $l_1-1$-th column. \\
It is easily seen that a left-cyclic Latin Square satisfies both this requirements.
\end{proof}
\begin{example}
Consider the received constellation at the relay when the end nodes use 4-PAM constellation and let the channel condition be $z=1$ as given in Fig.\ref{fig:pam_received}.
The singularity removal constraints are 
\begin{align*}
\{(0,1)(1,0)\},~\{(0,2)(1,1)(2,0)\},~\{(0,3)(1,2)(2,1)(3,0)\}, \\ \{(1,3)(2,2)(3,1)\},~ \mbox{and} \{(2,3)(3,2)\}.
\end{align*}
\noindent
The Latin Square which removes this singular fade state is given in Fig.\ref{cyclic_LS}.

\begin{figure}[h]
\centering
\includegraphics[totalheight=.3in,width=2.5in]{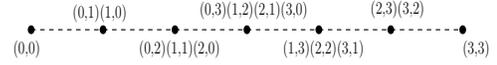}
\caption{Received Constellation at the relay for $z=1$.}     
\label{fig:pam_received}        
\end{figure}
\begin{figure*}[htbp]
\centering
\begin{tabular}{|c|c|c|c||c|c|c|c||c|c|c|c||c|c|c|c|}
\hline 0 & 1 & 2 & 3 & 4 & 5 & 6 & 7 & 8 & 9 & 10 & 11 & 12 & 13 & 14 & 15\\
\hline 1 & 2 & 3 & 0 & 5 & 6 & 7 & 4 & 9 & 10 & 11 & 8 & 13 & 14 & 15 & 12\\ 
\hline 2 & 3 & 0 & 1 & 6 & 7 & 4 & 5 & 10 & 11 & 8 & 9 & 14 & 15 & 12 & 13\\
\hline 3 & 0 & 1 & 2 & 7 & 4 & 5 & 6 & 11 & 8 & 9 & 10 & 15 & 12 & 13 & 14\\
\hline
\hline 4 & 5 & 6 & 7 & 8 & 9 & 10 & 11 & 12 & 13 & 14 & 15 & 0 & 1 & 2 & 3\\
\hline 5 & 6 & 7 & 4 & 9 & 10 & 11 & 8 & 13 & 14 & 15 & 12 & 1 & 2 & 3 & 0\\ 
\hline 6 & 7 & 4 & 5 & 10 & 11 & 8 & 9 & 14 & 15 & 12 & 13 & 2 & 3 & 0 & 1\\
\hline 7 & 4 & 5 & 6 & 11 & 8 & 9 & 10 & 15 & 12 & 13 & 14 & 3 & 0 & 1 & 2\\
\hline
\hline 8 & 9 & 10 & 11 & 12 & 13 & 14 & 15 & 0 & 1 & 2 & 3 & 4 & 5 & 6 & 7\\
\hline 9 & 10 & 11 & 8 & 13 & 14 & 15 & 12 & 1 & 2 & 3 & 0 & 5 & 6 & 7 & 4\\ 
\hline 10 & 11 & 8 & 9 & 14 & 15 & 12 & 13 & 2 & 3 & 0 & 1 & 6 & 7 & 4 & 5\\
\hline 11 & 8 & 9 & 10 & 15 & 12 & 13 & 14 & 3 & 0 & 1 & 2 & 7 & 4 & 5 & 6\\
\hline
\hline 12 & 13 & 14 & 15 & 0 & 1 & 2 & 3 & 4 & 5 & 6 & 7 & 8 & 9 & 10 & 11\\
\hline 13 & 14 & 15 & 12 & 1 & 2 & 3 & 0 & 5 & 6 & 7 & 4 & 9 & 10 & 11 & 8\\ 
\hline 14 & 15 & 12 & 13 & 2 & 3 & 0 & 1 & 6 & 7 & 4 & 5 & 10 & 11 & 8 & 9\\
\hline 15 & 12 & 13 & 14 & 3 & 0 & 1 & 2 & 7 & 4 & 5 & 6 & 11 & 8 & 9 & 10\\
\hline 
\end{tabular}
\caption{Standard Latin Square $L_{QAM}$ for 16-QAM.}
\vspace{-.1 cm}
\label{qam_lat}
\end{figure*}
\begin{figure}[h]
\centering
\begin{tabular}{|c|c|c|c|}
\hline 0 & 1 & 2 & 3 \\ 
\hline 1 & 2 & 3 & 0 \\ 
\hline 2 & 3 & 0 & 1\\
\hline 3 & 0 & 1 & 2 \\
\hline 
\end{tabular}
\caption{Left-cyclic Latin Square to remove the singular fade state $z=1$}
\vspace{-.1 cm}
\label{cyclic_LS}
\end{figure}
\end{example}

\subsection{Standard Latin Square for $M-$QAM}
In this subsection standard Latin Square for a square $M$-QAM constellation is obtained from that of $\sqrt M$-PAM constellation. 

Let $PAM-i,$ for $i=1,2,\cdots, {\sqrt M},$ denote the symbol set consisting of $\sqrt M$ symbols $\{(i-1)\sqrt M, ((i-1)\sqrt M)+1, ((i-1)\sqrt M)+2,\cdots,((i-1)\sqrt M)+(\sqrt M-1)\}.$ Let $L_{PAM-i}$ denote the standard Latin Square with  symbol set $PAM-i$ for $\sqrt M$-PAM and also let  $L_{QAM}$ denote the standard Latin Square for $M$-QAM. Then, $L_{QAM}$ is given in terms of $L_{PAM-i},$ $i=1,2,\cdots, \sqrt M,$ as the block left-cyclic Latin Square shown in Fig. \ref{fig:qam_const}. This is formally shown in the following Lemma.

\begin{lemma}
\label{qam_ls}
Let $PAM-i$  for $i=1,2,\cdots, {\sqrt M},$ denote the symbol set consisting  of $\sqrt M$ symbols $\{(i-1)\sqrt M, ((i-1)\sqrt M)+1, ((i-1)\sqrt M)+2,\cdots,((i-1)\sqrt M)+(\sqrt M-1)\}$ and let  $L_{PAM-i}$ stand for the Latin Square that removes the singular fade state $z=1$ with a symbol set $PAM-i$ for $\sqrt M$-PAM. Then arranging the cyclic Latin Squares $L_{PAM-i}$ as shown Fig.\ref{fig:qam_const} where each row is a blockwise left-cyclically shifted version of the previous row results in a Latin Square which removes the singular fade state $z=1$ for $M$-QAM. 
\end{lemma}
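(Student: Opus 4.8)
The plan is to verify directly that the block left-cyclic array described in Fig.~\ref{fig:qam_const} is (i) a genuine Latin Square on the symbol set $\mathbb{Z}_M$ and (ii) satisfies every singularity-removal constraint associated with the fade state $z=1$ for $M$-QAM. For (i) I would argue that, since each block $L_{PAM-i}$ is a left-cyclic Latin Square on its own $\sqrt M$ symbols $PAM\text{-}i$, within any block-row each of the $\sqrt M$ block-columns carries the symbols of a distinct $PAM\text{-}i$, and a blockwise left-cyclic shift permutes which block-row contributes which $PAM\text{-}i$ to a given block-column; so reading down a fixed column of the big array one sees each symbol of each $PAM\text{-}i$ exactly once, and similarly along rows. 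Hence every symbol of $\mathbb{Z}_M$ occurs exactly once per row and per column.

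For (ii) the key is to translate the $z=1$ singularity-removal constraints for $M$-QAM into constraints about the PAM blocks. With the labelling $\mu$ of \eqref{mumap}, a QAM point $A_{mI}+jA_{mQ}$ has row/column index whose ``high digit'' (base $\sqrt M$) is determined by the in-phase component $A_{mI}$ and whose ``low digit'' by the quadrature component $A_{mQ}$; equivalently index $= (\text{PAM}_I\text{-index})\sqrt M + (\text{PAM}_Q\text{-index})$. A constraint $\{(k,l),(k',l')\}$ for $z=1$ comes from $x_A + x_B = x_{A}' + x_{B}'$, which separates into an in-phase equation $A_{mI}^{(k)} + A_{mI}^{(l)} = A_{mI}^{(k')} + A_{mI}^{(l')}$ and a quadrature equation $A_{mQ}^{(k)} + A_{mQ}^{(l)} = A_{mQ}^{(k')} + A_{mQ}^{(l')}$. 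The in-phase equation is exactly a $z=1$ singularity-removal constraint for $\sqrt M$-PAM on the high digits, which (by the PAM lemma) dictates which block-row/block-column pair of the big array the entry lives in and forces the \emph{same} block $L_{PAM-i}$ to be used — i.e., the block-left-cyclic structure handles it. The quadrature equation is a $z=1$ constraint for $\sqrt M$-PAM on the low digits, which is handled inside that common block because each $L_{PAM-i}$ is itself a standard Latin Square for $\sqrt M$-PAM. Composing the two shows the big array assigns the two cells $(k,l)$ and $(k',l')$ the same symbol.

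I would organize the write-up as: first fix notation relating the QAM index to its two PAM digits via \eqref{mumap}; then state and prove the Latin Square property of the block array (a short combinatorial argument using the two-level cyclic structure); then take an arbitrary $z=1$ constraint, split it into in-phase and quadrature PAM constraints, invoke the $\sqrt M$-PAM lemma twice, and conclude the two cells receive equal symbols; finally note the minimum cluster distance is therefore positive so $z=1$ is removed. The main obstacle I anticipate is the bookkeeping in the second step: one must check that the ``carry-free'' splitting of the index into $(\text{high},\text{low})$ digits is compatible with \emph{both} the blockwise cyclic shift (acting on high digits) and the intra-block cyclic shift (acting on low digits), and that a PAM constraint of ``Case (ii)'' type on the high digits (a shift $k_1\mapsto k_1+m$) really corresponds to a blockwise shift rather than leaking across block boundaries — which is where the specific choice of symbol sets $PAM\text{-}i$ (consecutive blocks of size $\sqrt M$) and the labelling \eqref{mumap} must be used carefully. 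Everything else is routine once that correspondence is pinned down.
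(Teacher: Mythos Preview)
Your proposal is correct and follows essentially the same route as the paper: both arguments split a $z=1$ QAM constraint into its in-phase and quadrature components via the labelling \eqref{mumap} and then observe that the block-left-cyclic structure handles the in-phase (high-digit) shift while the intra-block left-cyclic structure handles the quadrature (low-digit) shift. The paper packages this as the single combined index formula $k_2=k_1+\tfrac12(m_1\sqrt{M}+m_2)$, $l_2=l_1-\tfrac12(m_1\sqrt{M}+m_2)$ and asserts that this ``is'' the construction of Fig.~\ref{fig:qam_const}, whereas you make the two-level decomposition and the Latin Square verification explicit; your anticipated ``carry'' worry is a non-issue precisely because $a_1+m_1$ and $b_1+m_2$ are themselves valid PAM coordinates, so the high and low digits update independently.
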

\begin{proof}
Note that the matrix in Fig. \ref{fig:qam_const} is a $M \times M$  matrix, which is also a $\sqrt M \times \sqrt M$ block left-cyclic  matrix where each block is a $\sqrt M \times \sqrt M$  left-cyclic matrix $L_{PAM-i}$ for some $i.$      

Let $a_{1}+jb_{1}, a_{2}+jb_{2},a_{1}^{\prime}+jb_{1}^{\prime}$ and $a_{2}^{\prime}+jb_{2}^{\prime}$, where $a_{i},a_{i}^{\prime},b_{i}$ and $b_{i}^{\prime} \in \{-(\sqrt M-1),-(\sqrt M-3),\cdots,(\sqrt M-3),(\sqrt M-1)\}$ for $i \in \{1,2\}$ be four $M$-QAM constellation points such that $a_{1}+jb_{1}$ and $a_{1}^{\prime}+jb_{1}^{\prime}$ are used by node A and $a_{2}+jb_{2}$ and $a_{2}^{\prime}+jb_{2}^{\prime}$ are used by end node B, and result in a same point in the effective received constellation at the relay node for singular fade state $z=1$, i.e.,
\begin{align*}
a_{1}+jb_{1}+a_{2}+jb_{2}=a_{1}^{\prime}+jb_{1}^{\prime}+a_{2}^{\prime}+jb_{2}^{\prime}.
\end{align*}

Let $a_{1}^{\prime}=a_1+m_1$ and $b_{1}^{\prime}=b_1+m_2$ where $m_1,m_2 \in \{-2(\sqrt M -1), -2(\sqrt M -2),\cdots,2(\sqrt M -2),2(\sqrt M -1)\}$. Then, $a_{2}^{\prime}=a_2-m_1$ and $b_{2}^{\prime}=b_2-m_2.$ Then, using the map defined in \ref{mumap}, let
\begin{align*}
k_1=\mu(a_{1}+jb_{1})\\ 
l_1=\mu(a_{2}+jb_{2})
\end{align*}
\vspace{-.7cm}
\begin{align*}
k_2=\mu(a_{1}^{\prime}+jb_{1}^{\prime})=\mu(a_1+m_1+j(b_1+m_2))\\
l_2=\mu(a_{2}^{\prime}+jb_{2}^{\prime})=\mu(a_2-m_1+j(b_2-m_2))
\end{align*}
Since, for $z=1,$ the four complex numbers result in the same point in the effective constellation at the relay, $\{(k_1,l_1)(k_2,l_2)\}$ is a singularity removal constraint for $z=1.$ From the 
above equations it follows that
\begin{align*}
k_2=k_1+\frac{1}{2} (m_1 \sqrt M+m_2)\\
l_2=l_1-\frac{1}{2}(m_1 \sqrt M+m_2)
\end{align*}
The above equations precisely mean the construction shown in Fig.\ref{fig:qam_const}. This completes the proof. 

\end{proof}

\begin{figure}[h]
\centering
\includegraphics[totalheight=2.8in,width=2.8in]{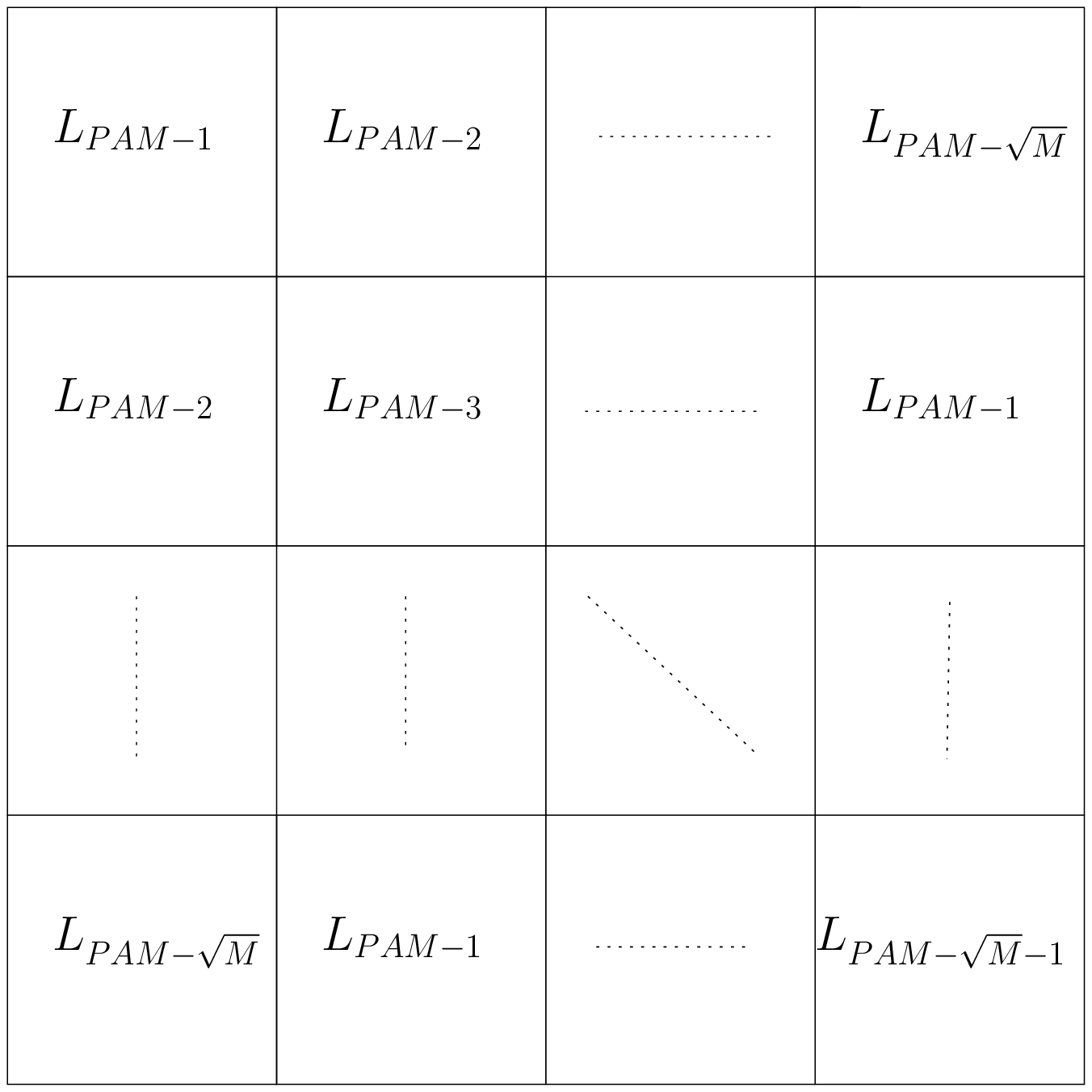}
\caption{Construction of $L_{QAM}$ for $z=1$.}     
\label{fig:qam_const}        
\end{figure}
The standard Latin Square for 16-QAM is shown in Fig.\ref{qam_lat}.

We define a minimal Latin Square as,
\begin{definition} An $M \times M$ Latin Square with $M$ symbols is termed as a minimal Latin Square.
\end{definition}
\section{SIMULATION RESULTS}
\label{sec4}
 \begin{figure}[t]
\centering
\includegraphics[totalheight=2in,width=3in]{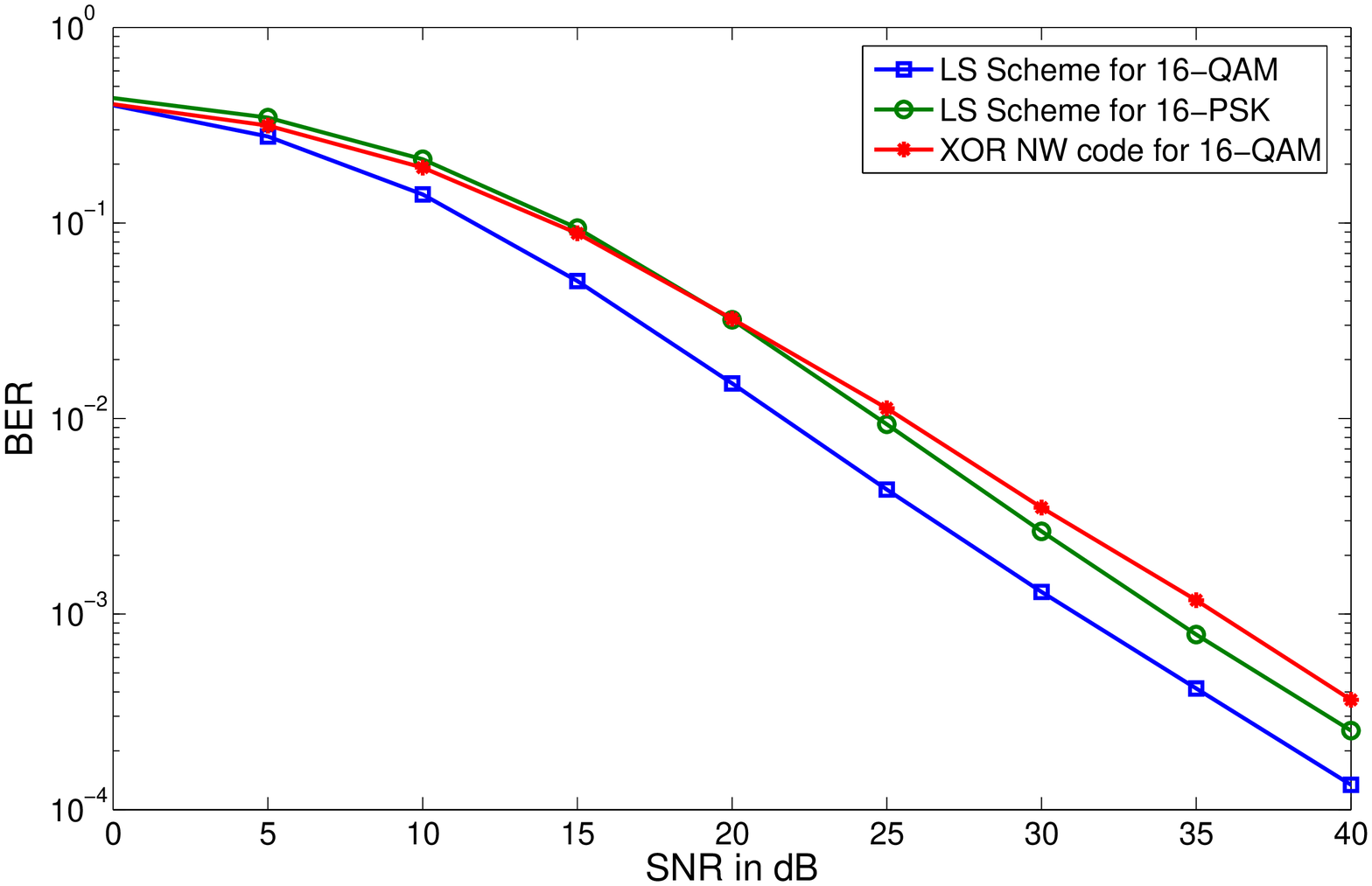}
\caption{SNR vs BER for different schemes when the end nodes use 16-QAM and 16-PSK for a Rayleigh fading scenario.}     
\label{fig:rayleigh}        
\end{figure}

The proposed Latin Square (LS) Scheme (\cite{NVR}) is based on removing the singular fade states. For 16-PSK all the 912 singular fade states can be removed with minimal Latin Squares, but for 16-QAM some singular fade states cannot be removed with minimal Latin Squares. Since 16-QAM have only 388 singular fade states, in comparison with 912 singular fade states of 16-PSK, 16-QAM offers better distance distribution in the MA stage. For a given average energy, the end to end BER is a function of distance distribution of the constellations used at the end nodes as well as at the relay. The simulation results for the end to end BER  as a function of SNR is presented in this section for different fading scenarios.

Consider the case when $H_A , H_B , H_{A}^{\prime}$ and $H_{B}^{\prime}$ are distributed according to Rayleigh distribution, with the variances of all the fading links are assumed to be 0 dB. The end to end BER as a function of SNR in dB when the end nodes use 16-QAM signal sets as well as 16-PSK signal sets with same average energy is given in Fig.\ref{fig:rayleigh}. The end to end BER for XOR network code for 16-QAM is also given. It can be observed that the LS Scheme for 16-QAM outperforms LS Scheme for 16-PSK as well as XOR network code. 

Consider the case when $H_A , H_B , H_{A}^{\prime}$ and $H_B^{\prime}$ are distributed according to Rician distribution, with the Rician factor of 5 dB and the variances of all the fading links are assumed to be 0 dB. In Fig.\ref{fig:rician} the end to end BER as a function of SNR in dB for LS scheme for 16-PSK, 16-QAM and XOR network coding for 16-QAM is given. It is observed that the LS scheme gives large gain over the XOR network coding scheme. The LS scheme for QAM is  better in end to end BER performance in comparison with the LS scheme for PSK.

\section{DISCUSSION}
In this paper, for the design of modulation schemes for the physical layer network-coded two way relaying scenario with the protocol which employs two phases: Multiple access (MA) Phase and Broadcast (BC) phase, with both end nodes use square QAM constellation is studied. We showed that there are many advantages of using square QAM constellation. With the help of the relation between exclusive law satisfying clusterings and Latin Squares we propose a method to remove the singular fade states.  This relation is  used to get all the maps to be used at the relay efficiently. We proposed a construction scheme to get the Latin Square for square QAM constellation from PAM constellation. Here we concentrated only on singular fade states and the clusterings to remove that with only the minimum cluster distance under consideration. We are not considering the entire distance profile as done in \cite{APT1}. Unlike in the case of \cite{APT1}, we could remove most of the singular fade states with standard Latin Square and its isotopes. We presented the simulation results showing the end to end BER performance when the end nodes use PSK constellation as well as QAM constellations. 
\begin{figure}[t]
\centering
\includegraphics[totalheight=2in,width=3in]{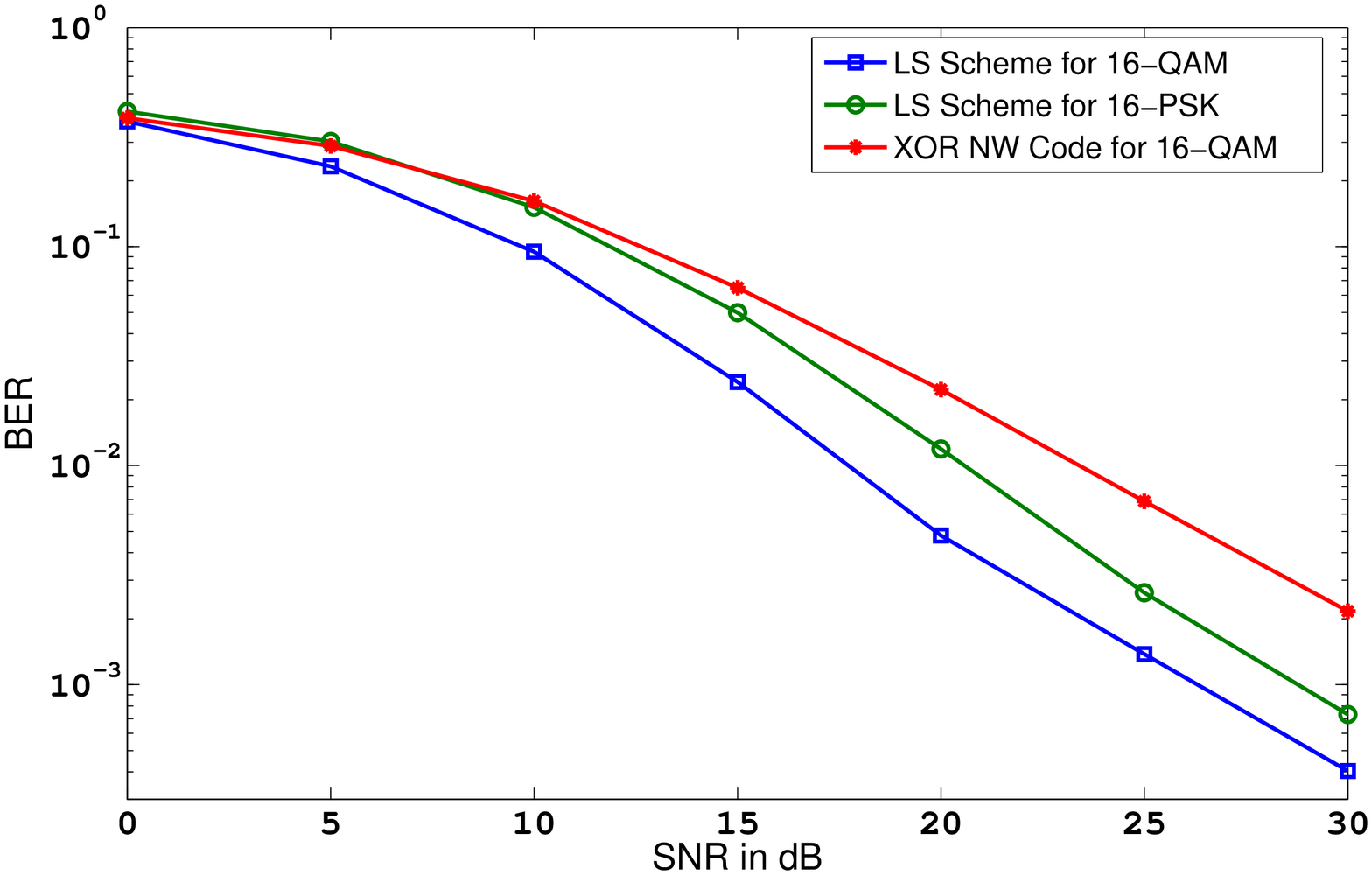}
\caption{SNR vs BER for different schemes when the end nodes use 16-QAM and 16-PSK for a Rician fading scenario with Rician factor 5 dB. }     
\label{fig:rician}        
\end{figure}

\end{document}